\newtheorem{theor}{Theorem}
\newcommand{\M}{\mathcal{M}}
\newcommand{\ve}{\overrightarrow}
\newcommand{\vep}{\varepsilon}
\newcommand{\lp}{\left(}
\newcommand{\rp}{\right)}
\newcommand{\pr}{\mathrm{pr}}
\begin{document}
\vspace*{1cm}
\title{Flat spacetime in a~capsule\vspace{1cm}}

\author{Andrzej Herdegen}
\affiliation{Institute of Physics, Jagiellonian University,
Reymonta 4, 30-059 Cracow, Poland\vspace{2cm}}
\email{herdegen@th.if.uj.edu.pl}

\begin{abstract}
 We propose a~parallel introduction to Galilean and Einsteinian
 relativity based on the causal structure and inertial motions.
 Galilean and Poincar\'e transformations, as objects secondary
 to the geometrical structure, are left aside.
\end{abstract}

\maketitle

\section{Introduction}

This article is intended for university level teachers
lecturing, and students learning, special relativity (SR). It
is not meant as a~text which could be directly used as a~SR
primer. Rather it gives a~background, or an outline on which
one can elaborate the exposition of SR. We assume as
a~background for this article a~course in linear algebra
including real vector and affine spaces, direct sums of
subspaces, linear forms and symmetric bilinear forms of any
signature.

We propose a~highly structured and logical approach to the
fundamentals of SR based on its causal structure and relativity
of inertial motions. For comparison and better understanding we
parallelly build the Galilean spacetime (GS) on similar ideas.
We indicate that the causal structure determines the metric
structure of SR spacetime uniquely, which is not the case for
the choice of Euclidean metric in the Galilean case.

We want to stress the point that the Galilean and Lorentz
(Poincar\'e) transformations are objects secondary to the
geometric structure of spacetime: they are affine mappings
leaving this structure invariant. We regard basing the
introduction to SR on these transformations as a~serious
misconception and we do not discuss them in this article.

We are also of the opinion that introducing SR, for the sake of
alleged simplicity, from the three-dimensional rather than full
geometrical point of view, in fact makes understanding of SR
more difficult, and can easily lead to misconceptions. We
regard as especially harmful figures illustrating hypothetical
relative motion of frames as depicted in Fig.~1. Whereas this
is not the best, but correct picture in GS, it is completely
wrong in SR. The reason for that is that the hyperplanes of
constant time (`pure space') of observers in relative motion
are not parallel, so they cannot be regarded as `sliding' on
each other.

Elements of the programme sketched above appeared, of course,
in many earlier publications and books (see e.g.~Refs.~1-3) but
we believe that our scheme adds some value to the clarity and
logic.

In addition we discuss some simple geometric effects in the
present context. This will include a~discussion of the view of
the celestial sphere as seen by different
observers.\cite{pen,ter} This point is particularly worth
adding, as it is usually treated with the help of a~rather
indirect method of stereographic projection.\cite{cel} We
discuss it directly on the celestial spheres of two observers.

In all discussions of effects involving different observers we
consistently avoid, as mentioned above, the use of Galilean or
Lorentz transformations. To relate the views on the spacetime
as seen by two inertial observers one needs only to know the
directional vectors of their world-lines. On the other hand one
needs complete bases attached to the observers to specify a
transformation between them.

\section{Homogeneity with respect to translations and the affine
structure}\label{affine}

It is fairly obvious from everyday experience that one needs
four real numbers to place an event in space and time. For
a~given event the specific values of these numbers depend on an
adopted system of labels, but they always form an element of
the set $\mathbb{R}^4$. Our spacetime is a~structure based on
this set.

Another common experience points to the applicability of
spacetime translations: if a~physical occurrence takes place in
a~given region of space and within some time-span, an analogous
occurrence may take place elsewhere and at another time. We
include this property in our construction of a~model of the
spacetime in the following form: the group of four-dimensional
translations acts transitively on the spacetime. This leads us
to the following starting point for the construction of
a~spacetime model:
\begin{center}
Flat spacetime is modeled by a~real four-dimensional affine
space~$(\M,M)$.
\end{center}

Here $\M$ denotes the affine space based on the
four-dimensional vector space~$M$. We adopt the notation
$P,Q,\ldots$ for points in $\M$ and $x,y,\ldots$ for vectors in
$M$. We write $x=\ve{PQ}$ if $Q=P+x$. Moreover, if $P\in\M$ and
$N\subset M$ is any subset then we use the usual shorthand:
$P+N= \{P+x\mid x\in N\}$. In particular, straight lines are
one-dimensional affine subspaces $P+L(x)$, where $L(x)$ denotes
the one-dimensional vector subspace spanned by the vector~$x$.
Ordered vector bases in $M$ will be denoted by
$(e_0,e_1,e_2,e_3)$. See Fig.~2 for a~graphic representation
(here, as in the following, one space dimension is omitted).

\section{Causal structure and inertial motions}\label{causal}

Of course, the affine space structure is still a~very poor one,
one needs further specification. The most obvious element
needed is a~one introducing the differentiation between
physical time and space directions. This is achieved in the
following way.

We shall say that the spacetime is equipped with the
\emph{causal structure} if in the accompanying vector space one
has distinguished the following set (see Fig.~3):
\begin{itemize}
 \item[GS:] a~three-dimensional subspace $S\subset M$,
 \item[SR:] a~homogeneous vector quadric $V\subset M$
     (different from a~subspace), with respect to which
     three dimensions of $M$ are on equal footing, but not
     the fourth.
\end{itemize}

By a~homogeneous vector quadric we mean here a~set of vectors
$x\in M$ whose coordinates $x^0,x^1,x^2,x^3$ in some (and then
any) basis satisfy the equation
\mbox{$\sum_{i,j=0}^3\alpha_{ij}x^ix^j=0$}, with some
basis-dependent numerical coefficients $\alpha_{ij}$. We recall
that for any such quadric there is a~basis in which it takes
one of the forms
$\vep_0(x^0)^2+\vep_1(x^1)^2+\vep_2(x^2)^2+\vep_3(x^3)^2=0$,
where $\vep_\mu=0,\pm1$ (uncorrelated values). The only
possibility (up to a~permutation of the basis vectors) to
satisfy the demand imposed above on $V$ is that in this
canonical basis $V$ is a~cone given by:
\begin{equation}\label{cone}
 x\in V\quad \iff\quad (x^0)^2-(x^1)^2-(x^2)^2-(x^3)^2=0\,.
\end{equation}
We shall say that a~vector $x$ lies inside (or outside) $V$ if
\mbox{$(x^0)^2-(x^1)^2-(x^2)^2-(x^3)^2>0$} ($<0$) respectively.

We say that a~nonzero vector is a~\emph{causal vector} if it:
\begin{itemize}
 \item[GS:]  does not lie in $S$,
 \item[SR:] lies inside or on $V$.
\end{itemize}
In addition we introduce the notion of a~\emph{timelike vector}
which
\begin{itemize}
 \item[GS:] is identical with a~causal vector,
 \item[SR:] lies inside $V$.
\end{itemize}

We shall say that two events $P$ and $Q$ are \emph{causally
related} if $\ve{PQ}$ is a~causal vector, and they are
\emph{temporally related} if it is a~timelike vector.

The causal structure makes contact with physics by the
following identification. An \emph{inertial motion} is
a~straight line in spacetime $\M$ with a~timelike directional
vector (thus any two events on this line are temporally
related). Such lines will be called world-lines of the motion
(see Fig.~4)

If a~point $Q\neq P$ is not causally related to $P$ we say that
it lies \emph{elsewhere} with respect to $P$. One then cannot
reach $Q$ from $P$ by an inertial motion.

\section{The four orientations of the
spacetime}\label{orientation}

Let us choose a~basis of $M$ in which
\begin{itemize}
 \item[GS:] the subspace $S$ is given by $x^0=0$,
 \item[SR:] the cone $V$ has the canonical form.
\end{itemize}
The set of causal vectors splits into two disjoint sets: those
for which $x^0>0$ or $x^0<0$ respectively in the distinguished
basis. We denote one of these sets by $C_+$ and call it the
future and the other by $C_-$ and call it the past. (After this
choice has been done we can adjust the sign of $x^0$ so that
$x^0>0$ for $x\in C_+$.) Then the future (past) of any event
$P$ is the set $P+C_+$ ($P+C_-$), and $Q$ is in the future of
$P$ if, and only if, $P$ is in the past of $Q$. Let us write
$Q>P$ for ``$Q$ is in the future of $P$'', and $Q\geq P$ for:
$Q>P$ or $Q=P$. Then the relation $Q\geq P$ defines a~partial
order in $\M$:
\begin{itemize}
 \item[$1^\mathrm{o}$] $P\geq P$,
 \item[$2^\mathrm{o}$] if $Q\geq P$ and $P\geq Q$ then
     $Q=P$,
 \item[$3^\mathrm{o}$] if $R\geq Q$ and $Q\geq P$ then
     $R\geq P$.
\end{itemize}
The only less obvious of these properties is the third one in
the special relativity case. To prove it observe that $x\in
C_+$ if in a~canonical basis
\mbox{$0<x^0\geq\sqrt{(x^1)^2+(x^2)^2+x^3)^2}$}. If $y$ is
another such vector then it is easily seen that the same
relation is satisfied with $x$ replaced by $x+y$, which was to
be proved. See Fig.~5 for a~graphic representation of causally
defined regions.

As there are two possible choices for the identification of the
sets $C_\pm$ we say that there are two possible \emph{causal
orientations} of the spacetime $\M$.

At the same time $M$ as a~real vector space has two possible
orientations defined as usually as the equivalence classes of
bases. In combination with the causal orientation this gives
four choices of the spacetime $\M$ orientations.

\section{Relative rest, inertial observers, inertial
frames}\label{frames}

We do not have yet any metric tools, so we are unable to
determine relative velocity of inertial motions, but we can
already say what it means that two motions are in
\emph{relative rest}: their world-lines are parallel (i.e.~have
common directional vectors).

We decide that there is no need to differentiate between an
inertial motion and an often used term of \emph{inertial
observer}; the difference, if any, is a~rather psychological
one.

Finally, by an \emph{inertial frame} we mean the class of all
inertial observers remaining in relative rest to each other. We
do not see the need to make this notion more specific, as is
often assumed, by demanding that a~particular basis has been
chosen with the timelike vector along the world line of the
motions in this family.

\section{Metric structure, four-velocities}\label{metric}

We recall two facts from linear algebra:
\begin{itemize}
 \item[$1^\mathrm{o}$] The kernel (zero space) of a~nonzero
     linear form on a~vector space is a~subspace of
     codimension one. Conversely, any such subspace $S$
     determines uniquely up to a~constant factor a~linear
     form $Dt$ such that
    \begin{equation}\label{S}
    x\in S\quad \iff\quad Dt(x)=0\,.
    \end{equation}
 \item[$2^\mathrm{o}$] A real vector quadric $V$ (if
     different from a~subspace) determines uniquely up to
     a~constant factor a~symmetric metric $g$ such that
     \begin{equation}\label{V}
     x\in V\quad \iff\quad g(x,x)=0\,.
     \end{equation}
\end{itemize}
A proof of the second fact for the case of our cone $V$ is
given for completeness in the Appendix.

\subsection{Galilean spacetime}

In the case of the Galilean spacetime we chose the sign of $Dt$
by demanding that
\begin{equation}\label{sign}
 Dt(x)>0\quad \text{for}\quad x\in C_+\,.
\end{equation}
Then $Dt(\ve{PQ})>0$ if $Q$ lies in the future of $P$. The
remaining positive factor in the definition of $Dt$ is fixed
arbitrarily. For an arbitrarily chosen point $P_0$ we fix
a~real value $t(P_0)$. Then there is a~unique affine form
taking this value at $P_0$ and having $Dt$ as its linear part.
This means that for each pair of points $P,Q$ there is
\begin{equation}\label{gal_time}
 t(Q)=t(P)+Dt(\ve{PQ})\,.
\end{equation}
This form determines the universal time in the Galilean
spacetime. The metric structure of this spacetime is now
completed by choosing a~Euclidean metric $h$ on the subspace
$S$. This metric then determines `spatial' metric relations on
each hyperplane $Q+S$ of constant time. One notes that there
are no relations of this kind between points on different
constant time planes. Note also that the relative scale of the
metric tools $Dt$ and $h$ is arbitrary. See Fig.~6 for graphic
representation of the metric structure of GS.

The world-lines of inertial motions pierce precisely at one
point each of the constant time hyperplanes. For each family of
parallel inertial motions there is a~unique directional vector
$u$ for which $Dt(u)=1$. We shall call such vector a~\emph{unit
timelike, future-pointing vector} or the \emph{four-velocity}
of these world-lines.

Having chosen a~particular family of inertial parallel motions
characterized by the four-velocity $u$ one can split the vector
space into time and space parts by
\begin{equation}\label{ts_gal}
 M=L(u)\oplus S\,,
\end{equation}
where $L(u)$ denotes the one-dimensional subspace spanned by
$u$. Observers in the chosen family decompose each vector $x$
into the time and space parts by
\begin{equation}\label{vts_gal}
 x=Dt(x)u+x_u\,,\quad \text{so}\quad x_u\in S\,.
\end{equation}
Note that while $Dt(x)$ does not depend on $u$, the space part
$x_u$ does depend on this vector, that is to say on the family
of parallel inertial motions. The Euclidean scalar product $h$
can be applied to the space parts of any two vectors $x$ and
$y$ and we shall also write
\begin{equation}\label{eucl}
 h(x_u,y_u)=x_u\circ y_u\,.
\end{equation}

\subsection{Special relativity}\label{metric_SR}

In this case $g$ is fixed up to a~real factor by the cone $V$,
as described above. We choose its sign by the convention that
in the canonical basis of $V$ the metric has the signature
$(+1,-1,-1,-1)$. The remaining positive factor is chosen
arbitrarily. The metric structure of the spacetime is
determined completely by~$g$. The vector $x$ is a~timelike
vector when $g(x,x)>0$, and it is a~causal vector when it is
nonzero and $g(x,x)\geq0$. In addition we say that a~vector is
\emph{spacelike} if $g(x,x)<0$. We shall also use the notation
\begin{equation}\label{scalar}
 g(x,y)=x\cdot y\,,\quad\quad x\cdot x=x^2\,.
\end{equation}
See Fig.~7 for the metric properties of vector types.

If $Q$ lies in the future of $P$ then there is a~unique
inertial motion joining them. The \emph{proper time interval}
covered by this motion from $P$ to $Q$ is determined by
\begin{equation}\label{timeint}
 \Delta\tau(P,Q)=\big[g(\ve{PQ},\ve{PQ})\big]^{1/2}\,.
\end{equation}
Let $u=\lambda\ve{PQ}$ with $\lambda>0$ so that $u\in C_+$. If
we demand that $g(u,u)=1$ then $u$ is fixed uniquely by these
conditions and $\lambda=\big[g(\ve{PQ},\ve{PQ})\big]^{-1/2}$.
We call such $u$ a~\emph{unit timelike, future-pointing vector}
or a~\emph{four-velocity}.

A four-velocity $u$ may be used to define a~time variable
correlated with the inertial frame defined by $u$. As in the
Galilean case we fix $t_u(P_0)$ and then there is a~unique
affine form $t_u$ taking this value at $P_0$ and having the
linear form
\begin{equation}\label{tu}
 Dt_u(x)=u\cdot x
\end{equation}
as its linear part. This means that for each pair of points
$P,Q$ there is
\begin{equation}\label{sr_time}
 t_u(Q)=t_u(P)+Dt_u(\ve{PQ})\,.
\end{equation}
Note that if $P$ and $Q$ lie on one $u$-world-line, $Q$ in the
future of $P$, then
\begin{equation}
 Dt_u(\ve{PQ})=\Delta\tau(P,Q)
\end{equation}
so the definition of $Dt_u$ is an extension of the proper time
interval on a $u$-world-line, Eq.~(\ref{timeint}).

Let us denote by $S_u$ the kernel of the form $Dt_u$, which is
the subspace of vectors orthogonal to $u$ with respect to the
metric $g$. Then the hyperplanes $P+S_u$ are the sheets of
constant $t_u$ time. The metric $g$ when restricted to $S_u$
reduces to $-h_u$, where $h_u$ is a~Euclidean metric. Thus the
objects $Dt_u$, $t_u$, $S_u$ and $h_u$ play a~similar role as
$Dt$, $t$, $S$ and $h$ in the Galilean case, but with several
important differences:
\begin{itemize}
 \item[$1^\mathrm{o}$] Here these quantities are not
     universal as in the Galilean case, they are functions
     of the vector $u$; thus they depend on the choice of
     a~family of inertial observers in relative rest.
 \item[$2^\mathrm{o}$] This relative character implies
     weaker status of these quantities as compared to the
     Galilean case.
 \item[$3^\mathrm{o}$] On the other hand the form $Dt_u$
     and the metric $h_u$ are uniquely determined by $g$,
     so their relative scale is unambiguous. This is to be
     contrasted with the Galilean case, where the scale of
     $Dt$ and $h$ could be fixed independently.
\end{itemize}

The decomposition of the vector space $M$ into time and space
parts takes now the form
\begin{equation}\label{ts_sr}
 M=L(u)\oplus S_u\,,\quad
 x=Dt_u(x)u+x_u\,,\quad x_u\in S_u\,,
\end{equation}
see Fig.~8. Note that in this case both $Dt_u(x)$ and $x_u$
depend on $u$, and for different choices of this four-velocity
the space parts $x_u$ lie in different subspaces. For
$x_u,y_u\in S_u$ we shall write $x_u\circ y_u=-x_u\cdot y_u$
and also denote $|x_u|=\sqrt{x_u\circ x_u}$. Then
\begin{equation}\label{ts_metr_sr}
 x\cdot y=Dt_u(x)Dt_u(y)-x_u\circ y_u\,,\quad x^2=(u\cdot x)^2-|x_u|^2\,.
\end{equation}
The scalar product, in contrast to the Galilean case, is
applicable to any vectors. See Fig.~8 and 9 for a~graphic
representation of decompositions and four-velocities, and
Fig.~10 for the dependence of $S_u$ on $u$.

\section{Equivalence of observers, light signals and their
speed}\label{equivalence}

The principle of relativity, i.e.~of the equivalence of
observers, can be now put in the following form:
\begin{itemize}
 \item[$1^\mathrm{o}$] Physical theories do not depend on
     the choice of the inertial frame, i.e.~of the
     four-velocity $u$ determining all inertial motions in
     a~given family.
 \item[$2^\mathrm{o}$] The set of physical states
     conforming with physical theories does not distinguish
     any of the inertial frames.
\end{itemize}
In particular:
\begin{itemize}
 \item[$1^\mathrm{o}$] In SR the Maxwell equations imply
     that the light signals propagate along straight lines
     whose directional vectors lie on $V$, i.e.~$l$ is such
     a~vector iff $g(l,l)=0$. These vectors are called
     therefore \emph{lightlike vectors} and $V$ is called
     the \emph{light-cone}. The Maxwell equations do not
     conform to the principle of relativity in the GS case.
     In this case the only way to avoid clash with the
     principle of relativity is to assume that light
     propagates with infinite speed, i.e.~the directional
     vectors of light rays lie in $S$.
 \item[$2^\mathrm{o}$] If one defines physical units of
     time and space in each inertial frame with the use of
     analogous physical phenomena~then the proportion of
     these units to the geometrical units defined by $Dt$
     and $h$ in the case of Galilean spacetime, and $g$ in
     the case of SR, is the same for all observers.
 \item[$3^\mathrm{o}$] In the SR case if $l$ is lightlike
     and $u$ is any four-velocity, then
     \mbox{$|Dt_u(l)|=|l_u|$} -- light covers in each
     inertial frame a~unit distance in a~unit time in
     geometrical units. If one determines physical units as
     in the preceding point their ratio gives the speed of
     light in all inertial frames in those units.
\end{itemize}

Note that the geometrical objects of the spacetime include
beside metrical tools also the choice of one of the four
orientations (as defined above). The principle of relativity in
the above form does not require the independence of physics of
this choice. As is well-known there are exceptions not
conforming to this extended demand.

\section{Relative velocities and their
composition}\label{velocities}

To be precise the term `four-velocity', although deeply rooted
in the language usually used in SR, is somewhat misleading. In
fact the vector $u$ of an inertial frame simply points in the
direction in which time flows but there is no space translation
for all observers in this frame. To introduce a~more justified
notion of velocity one needs a~reference observer which
`rests'. But `all observers are equal', so one has to say with
respect to which of them one makes the measurement.

Thus we assume there are given two four-velocities $u$ and $u'$
and we want to determine a~velocity of the motion defined by
$u'$ with respect to that defined by~$u$. We propose three
candidates:
\begin{itemize}
 \item[$1^\mathrm{o}$] $\Delta(u',u)=u'-u$,
 \item[$2^\mathrm{o}$] $v_\pr(u',u)=u'_u$,
 \item[$3^\mathrm{o}$] $v(u',u)=u'_u/Dt_u(u')$.
\end{itemize}
The r.h.s.\ in $2^\mathrm{o}$ is formed as in (\ref{vts_gal})
and (\ref{ts_sr}) and the subscript `pr' stands for `proper'.
In~$3^\mathrm{o}$ $Dt_u$ is independent of $u$ in the Galilean
case.

The first of these definitions satisfies the antisymmetry and
chain properties:
\begin{equation}\label{delta}
 \Delta(u',u)=-\Delta(u,u')\,,\quad
 \Delta(u'',u)=\Delta(u'',u')+\Delta(u',u)\,,
\end{equation}
which has obvious interpretational advantages.

\subsection{Galilean spacetime}

In this case $Dt(u')=1$ and $u'_u=u'-Dt(u')u=u'-u$, so all
three definitions coincide and we shall use notation $v(u',u)$
for this quantity (see Fig.~11). We have $v(u',u)\in S$ and
point $3^\mathrm{o}$ above tells us that this vector gives the
change of position of an observer with four-velocity $u'$ with
respect to one with four velocity $u$, undergone in unit time.
The composition of velocities obeys simple vector addition law
(\ref{delta}) (see Fig.~12).

\subsection{Special relativity}

In this case all three definitions are different (see Fig.~13).
The first one has the advantage of the vector addition
composition law (\ref{delta}) (see Fig.~14), but $\Delta(u',u)$
does not lie in any of the subspaces $S_u$ or $S_{u'}$. Rather,
it is in the subspace $S_w$ of the observer with four-velocity
`half way' between $u$ and $u'$:
$w=(u+u')/\sqrt{(u+u')\cdot(u+u')}$.

The second and the third definitions give parallel vectors in
$S_u$. The proper velocity $v_\pr(u',u)$ is the displacement of
the motion along any world-line $P+L(u')$, as seen in the
$u$-frame, undergone during unit time interval as measured on
the world-line (proper time) (see Fig.~15). The velocity
$v(u',u)$ is a~similar displacement but scaled to unit time in
$u$-frame. It is only this latter quantity which is bounded by
$1$ (light velocity as defined in Section~\ref{equivalence}).

The explicit form of the two latter velocities is easily
obtained:
\begin{eqnarray}
 &&v_\pr(u',u)=u'-u'\cdot u\,u\,, \label{vpr}\\
 &&v(u',u)=\frac{u'}{u'\cdot u}-u\,.\label{v}
\end{eqnarray}
Neither of these velocities satisfies the antisymmetry or the
chain rule properties (\ref{delta}). If we write the first of
these equations in the form $u'=u'\cdot u\,u+v_\pr$ and take
the scalar square of both sides we find
\begin{equation}
 (u'\cdot u)^2-|v_\pr|^2=1\,
\end{equation}
(from now on we write $v_\pr\equiv v_\pr(u',u)$, $v\equiv
v(u',u)$). This tells us that the quantities $u'\cdot u$ and
$|v_\pr|$ may be represented as the hyperbolic cosine and
hyperbolic sine of some unique parameter $\psi\geq0$. If we
denote $k=\exp\psi\geq1$ we get the representation
\begin{equation}\label{ukv}
 u'\cdot u=\tfrac{1}{2}(k+k^{-1})\equiv c(k)\,,\quad
 |v_\pr|=\tfrac{1}{2}(k-k^{-1})\equiv s(k)\,,\quad
 |v|=\frac{s(k)}{c(k)}\,.
\end{equation}
Some other useful relations which follow are
\begin{equation}\label{cksk}
 c(k)=\sqrt{1+|v_\pr|^2}=\frac{1}{\sqrt{1-|v|^2}}\,,\quad\quad
 s(k)=\frac{|v|}{\sqrt{1-|v|^2}}\,,
\end{equation}
\begin{equation}\label{k}
 k=|v_\pr|+\sqrt{1+|v_\pr|^2}
 =\lp\frac{1+|v|}{1-|v|}\rp^{1/2}\,.
\end{equation}
We shall find the direct physical interpretation of $k$ in the
next section.

The magnitude of $k$ is invariant with respect to the
interchange of $u$ and $u'$, so if we denote $v'_\pr\equiv
v_\pr(u,u')$ and $v'\equiv v(u,u')$ then we have
\begin{equation}
 |v'_\pr|=|v_\pr|\,,\quad \quad |v'|=|v|\,.
\end{equation}

The motion of an observer with respect to the $u$-frame is
often defined rather in terms of $v_\pr$ or $v$ than $u'$, or
similarly with the role of observers interchanged, and then
\begin{equation}\label{nn}
 \begin{aligned}
 u'&=c(k)u+v_\pr=c(k)(u+v)=c(k)u+s(k)n\,,\\
 u&=c(k)u'+v'_\pr=c(k)(u'+v')=c(k)u'+s(k)n'\,,
 \end{aligned}
\end{equation}
where by $n$ and $n'$ we have denoted the unit spacelike
vectors pointing in the direction of $v$ and $v'$ respectively.
Although the use of $v_\pr$ or $v$ instead of $u'$ may seem
better suited for the point of view of the $u$-frame, one has
to be careful not to project Galilean properties of velocities
to SR. For instance, we have $v'\neq -v$, in contrast to GS.

The composition of velocities of these types is rather
complicated and not very illuminating. The special case of
four-velocities $u$, $u'$, $u''$ lying in one two-dimensional
subspace will be discussed in the next section.

\section{Time measurement}\label{time}

The problem one wants to address here is the following. Two
events $P$ and $Q$ on a~world line with four velocity $u'$ are
separated by the vector $\Delta t' u'$, so the time interval
between them as measured directly by the inertial observer on
this world-line is $\Delta t'$. What time-span $\Delta t$ will
be measured between these events in the frame defined by the
four-velocity $u$?

\subsection{Galilean spacetime}

Here the answer is simple. The spacetime is equipped with the
universal time interval form $Dt$, so there is no doubt how to
measure this interval in any frame. One has
\begin{equation}
 \Delta t=Dt(\Delta t'u')=\Delta t'\,.
\end{equation}

\subsection{Special relativity}

If one employs the frame-dependent time interval form $Dt_u$
described in Section~\ref{metric_SR}, one finds
\begin{equation}\label{dilation}
 \Delta t=Dt_{u}(\Delta t'u')=u\cdot u'\ \Delta t'
 =c(k)\Delta t'\,,
\end{equation}
(notation as in the preceding section). This gives the famous
`time dilation' effect. However, one should be careful to
interpret this result properly. No inertial observer from the
$u$-frame can pass directly both events $P$ and $Q$, thus the
measurement in this frame is by necessity indirect. Observers
on the world-lines $P+L(u)$ and $Q+L(u)$ to establish one
frame-dependent time variable $t_u$ need only to agree on a
choice of a constant time hypersurface to synchronize their
clocks (as the time-interval form $Dt_u$ is known directly to
both of them). After this has been settled (see below) the time
$t_u(P)$ is measured directly by the first observer, and the
time $t_u(Q)$ is measured directly by the other. The difference
$t_u(Q)-t_u(P)$ gives $\Delta t$. See Fig.~16.

The synchronization of clocks can be done by the radar method.
The first observer sends at his time $t_1$ a light signal
towards the other one and receives it back reflected at $t_2$.
Denote by $X$ the event on the world-line of the first observer
at his time $(t_1+t_2)/2$, and by $Y$ the event on the
world-line of the second observer at which the reflection of
the light ray takes place, see Fig.~17. If $l_1$ and $l_2$ are
lightlike vectors as depicted in the figure, then
$(t_2-t_1)u=l_1+l_2$, $\ve{XY}=(l_1-l_2)/2$, so
$u\cdot\ve{XY}=0$. Thus $X$ and $Y$ lie in one hyperplane of
$u$-simultaneity and if the second observer agrees to set his
clock for $(t_1+t_2)/2$ at $Y$, the clocks will be
synchronized.

In real life the time dilation measurement is rarely, if at
all, done this way. Probably the most famous instance of the
dilation effect is the decay of muons produced by cosmic
radiation coming to Earth. Muons are unstable particles with a
characteristic lifetime (in their rest-frames). They are
produced with known energy (so also known velocity) by
scattered cosmic rays. One finds that their mean lifetime in
the Earth-frame is much longer than the characteristic one.
However, what is directly measured is not any time at all! One
measures the distance they cover during their life; then
knowing their relative velocity in the Earth-frame one
\emph{calculates} their lifetime in this frame.

Another type of time measurement is by registering the time of
arrival of light signals. Suppose that two inertial observers
travel along world-lines $P+L(u')$ and $P+L(u)$ respectively
(thus we assume for simplicity that they meet at $P$). Let both
of them set their clocks so as to show $0$ at $P$. The
$u'$-observer sends a light signal towards the $u$-observer at
his time $t'$, which arrives at the $u$-observer's world-line
at the time $t_+$ on that line. Thus one has the equation
$t'u'+l=t_+u$, where $l$ is the lightlike, future-pointing
vector connecting these two events (see Fig.~18). We write this
as
\begin{equation}
 l=t_+u-t'u'\,,\quad\quad  l\cdot l=0\,,\quad\quad l\cdot u>0\,.
\end{equation}
Solving the second equation for $t_+$ one obtains two values
out of which the third condition selects only one:
\begin{equation}
 t_+=u\cdot u'\,t'+\sqrt{(u\cdot u')^2-1}\,|t'|=c(k)t'+s(k)|t'|\,.
\end{equation}
Note that $t',t_+<0$ for observers approaching each other
(parts of world-lines causally preceding $P$) and $t',t_+>0$
for observers moving away from each other (parts of world-lines
causally following $P$). Let now the $u'$-observer send two
signals at times $t'_1$ and $t'_2>t'_1$, either both negative
or both positive, and denote $\Delta t'=t'_2-t'_1$, $\Delta
t_+=t_{+2}-t_{+1}$. Then one finds from the above relation that
\begin{equation}\label{arrival}
 \begin{aligned}
  \Delta t_+&=k^{-1}\Delta t'\quad&&\text{observers moving towards each other}\,,\\
  \Delta t_+&=k\Delta t'\quad&&\text{observers moving away from each other}\,.
 \end{aligned}
\end{equation}
Note that the result is completely different from the `dilation
effect'.

The above connections have a directly observable physical
consequence. The light is a wave phenomenon; the change of its
phase from one ray to another is the same for each of the above
observers. But the times corresponding to the given change of
phase, say $2\pi$, are related as above. Thus the frequencies
of light $\nu'$ and $\nu$ for the two observers are related by
\begin{equation}
 \begin{aligned}
  \nu&=k\,\nu'\quad&&\text{observers moving towards each other}\,,\\
  \nu&=k^{-1}\nu'\quad&&\text{observers moving away from each other}\,.
 \end{aligned}
\end{equation}

With the interpretation of $k$-coefficient given by the second
equation in (\ref{arrival}) we can now find a simple formula
for the composition of velocities (or rather their lengths) in
the special case of three co-planar four-velocities $u$, $u'$,
$u''$. Let the $k$-coefficients be denoted as in Fig.~19. This
figure then also shows that $K=kk'$. Using the last equation in
(\ref{ukv}) and Eq.~(\ref{k}) one finds
\begin{equation}
 |v(u'',u)|=\frac{|v(u',u)|+|v(u'',u')|}{1+|v(u',u)||v(u'',u')|}.
\end{equation}

We end this section with a warning against a popular error in
graphical representations of the time dilation found in many
introductory texts on SR. One of many variants is this: an
individual A is speeding in a rocket towards (or away from)
another individual B, who is busy with some activity. Each of
the individuals is equipped with a clock and A watches (by
`looking') B's activity. The claim then is that A will measure
B's activity to last longer then it lasts for B in agreement
with the time dilation formula. This, however, is wrong; in
fact A receives light signals from B, so his measurement will
give a result obeying one of the cases in Eqs.~(\ref{arrival}).
In fact, for approaching observers, the time in question is
shorter.

\section{Space measurement}

Here we pose the following question. Two parallel world-lines
with four-velocity $u'$ are separated by a~vector $z'$ which is
a `pure space' vector in the $u'$-frame. What is the `pure
space' vector $z$ which separates them in the frame defined by
$u$? These two vectors may be thought of as connecting two
particles in a rigid body in these two frames. This latter
notion has limitations in SR: it runs into difficulty when
accelerations are involved, and then needs an input of dynamics
to be modified. However, as long as only inertial motions are
involved, a rigid body may be \emph{identified} with some
family of parallel world-lines. This body rests in the frame
defined by these world-lines.

\subsection{Galilean spacetime}

Here again the answer is simple: the `pure space' directions
are universally determined by $S$, so
\begin{equation}
 z=z'\in S\,.
\end{equation}

\subsection{Special relativity}

In this case `pure space' means that $u'\cdot z'=u\cdot z=0$.
The condition for $z$ to connect the same two world-lines is
$z=z'+\lambda u'$ with some real~$\lambda$. Taking the scalar
product of this equation with $u$ we find this coefficient and
obtain
\begin{equation}\label{zzprim}
 z=z'-\frac{z'\cdot u}{u'\cdot u}\, u'\,.
\end{equation}
These two vectors can be decomposed as
\begin{equation}\label{zn}
 z'=z'_\perp+\alpha'n'\,,\quad
 z=z_\perp+\alpha n\,,
\end{equation}
where $z'_\perp$ is orthogonal to $u'$ and $n'$ (as defined at
the end of Section~\ref{time}), $z_\perp$ is orthogonal to $u$
and $n$, and $\alpha$, $\alpha'$ are numerical constants. Note
that $z'_\perp$ and $z_\perp$ are equivalently identified as
parts of $z'$ and $z$ orthogonal both to $u$ and $u'$. Taking
the scalar product of Eq.~(\ref{zzprim}) with $u'$ we find
$z\cdot u'=-z'\cdot u/u'\cdot u$. Using now Eqs.~(\ref{nn}) and
(\ref{zn})  we find after some simple algebra
\begin{equation}\label{zperp}
 z_\perp=z'_\perp\,,\quad \quad \alpha=-\frac{\alpha'}{c(k)}\,.
\end{equation}
The second of these equations describes the effect of the so
called `length contraction', whose popular formulation could
run as: `the dimensions parallel to the relative velocity
measured by the moving observer are by the factor $1/c(k)$
shorter then those measured by the observer in rest with
respect to the object being measured'.  However, one should
note that this formulation and the term `contraction' are
somewhat misleading:
\begin{itemize}
 \item[$1^\mathrm{o}$] The vectors $z'$ and $z$ connect two
     different pairs of events on the two world-lines
     considered, nothing is being `contracted'. Events
     separated by $z'$ are simultaneous in the rest frame
     of the `rigid body', while those separated by $z$ are
     simultaneous for the moving observer.
 \item[$2^\mathrm{o}$] The vectors $n'$ and $n$ (pointing
     in the directions of the two respective velocities)
     are not even parallel, so for each of the frames the
     term `parallel to the velocity' means something
     different.
\end{itemize}
Figure 20 illustrates the situation for the special case
$z'_\perp=z_\perp=0$, which means that for the $u$-observer the
rigid rod with ends on the two world-lines moves parallelly to
its axis.

The proper understanding of the above dismisses various `length
contraction paradoxes' in SR.\cite{con} The key to all of them
is a cautious analysis of the relation between various vectors
involved in the problem.

We illustrate this with a geometrical situation whose variants
lie at the base of most of these effects. Suppose we have two
pairs of parallel world-lines: $P+L(u')$, $Q+L(u')$, and
$P+L(u)$, $Q+L(u)$, so that the first lines in these pairs
intersect at $P$, and the second lines intersect at $Q$.
Physically this may be thought of as modeling two rigid rods in
relative motion, the ends of the first and the second rod
described by the lines in the first and in the second pair
respectively. The `front' ends of the rods meet at some point
and similarly the `back' ends meet at some other point. Let
$z'$ and $w$ be the `pure space' vectors (in respective
rest-frames) connecting the ends of rods and denote
$x=\ve{PQ}$. (See Fig.~21. The picture might suggest that the
rods are bound to clash and cannot `go through'. This is
because we lack in the picture the fourth dimension, which may
be used to slightly detach the rods.) Then one has
\begin{equation}\label{x}
 x=z'+\mu'u'=w+\nu u
\end{equation}
with some constants $\mu'$, $\nu$. We decompose $z'$ as in the
first Eq.~(\ref{zn}) and similarly write
\begin{equation}
 w=w_\perp+\beta n\,,\quad
 w'=w_\perp-\frac{\beta}{c(k)}\,n'\,,
\end{equation}
(the second formula obtained in analogy with Eqs.~(\ref{zn})
and (\ref{zperp}) is written down for later use). As $n$ and
$n'$ can be expressed as linear combinations of $u$ and $u'$
(see Eq.~(\ref{nn})), the consistency condition for the second
equation in (\ref{x}) is
\begin{equation}\label{zpw}
 z'_\perp=w_\perp\,,
\end{equation}
and then the constants $\mu'$ and $\nu$ have unique solutions,
which we do not need to write down explicitly.

The geometry of the situation is clear and no interpretational
difficulty arises if one insists on this four-dimensional
picture. However, if one uses the `length contraction' language
`paradoxes' easily arise. Suppose, for instance, that the
vector $x$ is spacelike (as in Fig.~21) and consider any
four-velocity orthogonal to $x$. Then the intersecting of lines
has this interpretation: in each of these frames the two rods
pass each other parallelly, with both respective ends
simultaneously coming into contact. But now the `paradoxical'
problem arises: if we go to some other frame not in this
family, then due to different velocities of the two rods they
will change their size in different way, so the ends cannot
meet. The simple explanation is, of course, that what is
simultaneous in one frame usually is not simultaneous in
another, which falsifies the above conclusion. And even more,
the rods moving parallelly in one frame usually do not remain
parallel in another.

To illustrate the last point suppose that in the above
geometrical setting $x=w$, i.e. the rods are parallel and of
equal length in the $u$-frame. This means that $w=z$, and
decomposing these vectors as before we find
$\alpha'=-c(k)\beta$. Using this and Eq.~(\ref{zpw}) we find
\begin{equation}
 w'=w_\perp-\frac{\beta}{c(k)}\,n'\,,\quad
 z'=w_\perp-c(k)\beta\,n'\,.
\end{equation}
These vectors are parallel if, and only if $w_\perp=0$ or
$\beta=0$. In all other cases rods move in the $u'$-frame askew
to each other. This is illustrated in Fig.~22.

\section{Non-inertial motions, proper time, simultaneity}

Inertial motions, as we have seen, have a special role to play
for the interpretation of the geometry of spacetime. However,
the picture would not be complete without mentioning other,
non-inertial, motions. Straight lines are special examples in
the more general class of curves. A \emph{regular curve} may be
defined as a set of points obtained as values of a
differentiable mapping $\lambda\mapsto P(\lambda)$, where
$\lambda$ is a real parameter taking values in some (finite or
not) interval on the real axis. The curve is invariant under a
change of parameter $\lambda=f(\lambda')$, where $f$ is
differentiable together with its inverse. Each regular curve
has at each its point $P(\lambda)$ a \emph{tangent vector}
defined as $dP(\lambda)/d\lambda$. The extension of tangent
vectors changes with the change of parameter (but the tangent
straight lines they generate remain unchanged).

We now define a general \emph{world-line} as a curve with a
four-velocity as its tangent vector at each its point.  We say
that $\tau$ is a \emph{proper time} of a world-line if it has
the form $\tau\mapsto P(\tau)$ and the equation
\begin{equation}\label{accdiff}
 \frac{dP(\tau)}{d\tau}=u(\tau)
\end{equation}
defines at each point the tangent four-velocity $u(\tau)$.
Physically proper time intervals are measured by clocks
traveling along the world-line. Integrating the above equation
one obtains
\begin{equation}\label{accint}
 \ve{P_1P_2}=\int_{\tau_1}^{\tau_2}u(\tau)\,d\tau\,,\quad\text{where}\quad
 P_i=P(\tau_i)\,.
\end{equation}
Note that sums of four-velocities are future-pointing timelike
vectors, so $P_2$ is in the future of $P_1$. One introduces
also the concept of the \emph{four-acceleration}:
\begin{equation}\label{acc}
 a(\tau)=\frac{du(\tau)}{d\tau}\,.
\end{equation}
Note that acceleration, like relative velocity, points in a
`purely spatial' direction:
\begin{equation}
 \begin{split}
 &\text{GS:}\quad\quad Dt(a(\tau))=\frac{d}{d\tau}Dt(u(\tau))=0\,,\\
 &\text{SR:}\quad\quad Dt_{u(\tau)}(a(\tau))=u(\tau)\cdot a(\tau)
 =\tfrac{1}{2}\frac{d}{d\tau}[u(\tau)]^2=0\,.
 \end{split}
\end{equation}
However, unlike relative velocity, the acceleration is absolute
-- it does not need a reference observer.

We now want to find
\begin{itemize}
 \item[$1^\mathrm{o}$] what is the relation of the proper
     time to affine time functions defined earlier,
 \item[$2^\mathrm{o}$] does the presence of acceleration
     influence the concept of simultaneity?
\end{itemize}

\subsection{Galilean spacetime}

We apply the linear form $Dt$ to both sides of
Eq.~(\ref{accint}) and find
\begin{equation}
 t(P_2)-t(P_1)=Dt(\ve{P_1P_2})
 =\int_{\tau_1}^{\tau_2}Dt(u(\tau))\,d\tau=\tau_2-\tau_1\,.
\end{equation}
Thus the proper time intervals are identical with the absolute
time intervals. Also, the notion of simultaneity is in no way
influenced by accelerations.

\subsection{Special relativity}

Here we take the form $Dt_u$ and then proceed as in the
Galilean case to find
\begin{equation}
 t_u(P_2)-t_u(P_1)
 =\int_{\tau_1}^{\tau_2}u\cdot u(\tau)\,d\tau
 \geq \tau_2-\tau_1\,.
\end{equation}
Therefore the proper time interval is always smaller than any
affine time function interval, except for the case when
$u(\tau)\equiv u$. The latter case gives simply
$P(\tau)=P(\tau_1)+(\tau-\tau_1)u$, which is an inertial
motion; proper time intervals are then equal to the
$u$-inertial time intervals on that line. In general this is
not the case. However, put $\tau_1=\tau$, $\tau_2=\tau+d\tau$
and $u=u(\tau)$. Then we find
\begin{equation}
 t_{u(\tau)}(P(\tau+d\tau))-t_{u(\tau)}(P(\tau))=
 d\tau\,,
\end{equation}
so locally the proper time interval is equal to the time
interval as defined earlier for inertial motions.

With accelerated motions in play it is now possible to let two
general observers start from $P_1$, take different routes, and
then meet again at $P_2$. In general their clocks will show
different time intervals between these two events. In
particular, let the first observer go straight from $P_1$ to
$P_2$ along an inertial world-line, and let $u$ be his
four-velocity. Then his clock will show the interval
$t_u(P_2)-t_u(P_1)$, which is always more than the reading of
the proper time interval for any accelerated observer. There is
no paradox here (the famous `twin paradox') -- the
accelerations, as noted above, are absolute, so there is no
symmetry between the observers.

Consider now simultaneity. Suppose that for an observer on the
world-line $P(\tau)$ we can extend this notion in the way
determined by his local position and four-velocity: event $X$
is from his point of view simultaneous with the event $P(\tau)$
iff $\ve{P(\tau)X}\cdot u(\tau)=0$. However, this leads to
conceptual difficulties. To see this suppose the observer
crosses $P_1$ with four-velocity $u_1$ and then $P_2$ with
four-velocity $u_2$. The two corresponding simultaneity
hyperplanes cross on the 2-plane of events $X$ determined by
the linear system
\begin{equation}
 \ve{P_iX}\cdot u_i=0\,,\quad i=1,2\,.
\end{equation}
Take any event $X$ on this 2-plane and put $X'_i=X+\ve{P_iX}$.
We have $\ve{P_iX'_i}=2\ve{P_iX}$, so $X'_i$ is simultaneous
with $P_i$. At the same time there is
$\ve{X'_1X'_2}=-\ve{P_1P_2}$. Therefore $X'_2$ is in the past
of $X'_1$. Thus an event which according to the above
definition is simultaneous with $P_1$ turns out to be in the
future of an event simultaneous with a later event $P_2$ (see
Fig.~23).

This difficulty should by no means be interpreted as an
argument against the objectivity of the `direction of time
flow'. This latter notion should be simply identified with the
choice of the causal orientation and the emerging partial order
$Q\geq P$, as discussed in Section~\ref{orientation}. The
difficulty rather points to the weakness of the notion of
simultaneity, its restricted applicability and, to some degree,
its conventional character. It also shows that the strict
`dilation' and `contraction' problems are of rather academic
nature.

\section{Four-momentum, four-angular momentum and their
conservation}\label{momentum}

The four-momentum of a~particle with mass $m_1$ and
four-velocity $u_1$ is given by
\begin{equation}\label{mom}
 p_1=m_1u_1\,.
\end{equation}
If one chooses a~reference point $O$ and $x_1$ is a~vector from
this point to the position of the particle then the
four-momentum tensor is defined by
\begin{equation}\label{ang}
 L_1=2x_1\wedge p_1\,.
\end{equation}
Let $p_1,\ldots,p_k$ be the initial and $p'_1,\ldots p'_l$  the
final four-momenta~in a~conservative mechanical process. The
invariant laws of momentum and angular momentum conservation
say
\begin{equation}\label{conservation}
 \sum_{i=1}^kp_i=\sum_{j=1}^lp'_j\,,\quad\quad
 \sum_{i=1}^kL_i=\sum_{j=1}^lL'_j\,.
\end{equation}

\subsection{Galilean spacetime}

Here the mass is an invariant of the four-momentum given by
$m_1=Dt(p_1)$. The decomposition of the four-momentum with
respect to the frame defined by the four-velocity $u$ is thus
\begin{equation}\label{mom_gal}
 p_1=m_1u+p_{1u}\,
\end{equation}
see Fig.~24. We see thus that the law of conservation of mass
and the law of conservation of momentum are aspects of one
observer-invariant law of conservation of four-momentum.

\subsection{Special relativity}

The mass again is an invariant, but formed in another way:
$p_1\cdot p_1=m_1^2$. Then in the $u$-frame we have
\begin{equation}\label{mom_sr}
 p_1=E_{1u}u+p_{1u}\,,\quad E_{1u}^2-|p_{1u}|^2=m_1^2\,,
\end{equation}
see Fig.~25. $E_{1u}$ has the interpretation of the energy as
seen in the chosen frame. Now the aspects of the
observer-invariant law of conservation of four-momentum are
laws of energy and momentum conservation, while the sum of
masses needs not to be conserved.

We observe that geometrical analogy is:
\begin{center}
 Galilean mass\quad $\leftrightarrow$ \quad Einsteinian energy
\end{center}
(and not energy $\leftrightarrow$ energy). This analogy is
further confirmed when one considers the time-space part of the
conservation of four-angular momentum. For freely moving
particles one obtains the law of uniform motion of center of
mass in the Galilean case, and of center of energy in the SR
case.

\section{Galilean kinetic energy}

The question then arises what is the geometrical status of the
Galilean kinetic energy and does its conservation have an
invariant character.

To answer this observe that while there is no geometrical
numerical  invariant formed out of space-part of a~single
timelike vector, one can form a~respective invariant for a~pair
of such vectors. Let $Dt(p_i)=m_i$, $i=1,2$, and let $u$ be any
four-velocity. Then $p_i=m_iu+p_i{}_u$, so that
\begin{equation}\label{inv}
 \frac{p_1}{m_1}-\frac{p_2}{m_2}=
 \frac{p_1{}_u}{m_1}-\frac{p_2{}_u}{m_2}\in S\,.
\end{equation}
Thus the number
\begin{equation}\label{invariant}
 d(p_1,p_2)=\frac{m_1m_2}{2}\,\left|\frac{p_1{}_u}{m_1}-\frac{p_2{}_u}{m_2}\right|^2\geq0
\end{equation}
does not depend on $u$ (see Fig.~26). For momenta
$p_1,\ldots,p_k$ it is now easy to show, that
\begin{equation}\label{energy}
 \sum_{i,j=1}^kd(p_i,p_j)=2ME-|P_u|^2\geq0\,,
\end{equation}
where
\begin{equation}\label{P}
 P=\sum_{i=1}^kp_i\,,\quad P=Mu+P_u\,,\quad
 E=\sum_{i=1}^k\frac{|p_i{}_u|^2}{2m_i}\,.
\end{equation}
We learn two facts:
\begin{itemize}
 \item[$1^\mathrm{o}$] If the total four-momentum is
     conserved, then the condition of energy conservation
     is Galilean invariant.
 \item[$2^\mathrm{o}$] There is always $E\geq |P_u|^2/2M$,
     and the equality holds if, and only if, all momenta
     are parallel.
 \end{itemize}

\section{Celestial sphere}

We fix a~reference point $O$ and consider all light rays coming
into this point. Imagine a~world-line of an inertial observer
with four-velocity $u$ passes through this point. At this point
the observer positions the space directions from which all
light rays arrive. We want to find how the picture obtained in
this way depends on the four-velocity $u$ of the observer.

\subsection{Galilean spacetime}

Here we assume that the light rays propagate with infinite
speed. Thus the straight lines of the rays lie in the
hyperplane $O+S$, and their directional vectors are in $S$. But
for such vectors the decomposition (\ref{vts_gal}) is trivial
and independent of $u$. Therefore the picture formed by light
on the celestial sphere is independent of the choice of
particular observer crossing the point $O$.

\subsection{Special relativity}

A~light ray with the directional past-pointing vector $-l\in V$
comes from the space direction pointed by the unit spacelike
vector
\begin{equation}\label{rlu}
 r(l,u)=\frac{-l_u}{|l_u|}
 = -\frac{l-u\cdot l\,u}{u\cdot l}=-\frac{l}{u\cdot l}+u\,,
\end{equation}
where we have used the fact that $|l_u|^2=-l_u\cdot l_u=(u\cdot
l)^2$ (see Fig.~27). If $u'$ is the four-velocity of another
observer passing $O$ and we denote for brevity $r=r(l,u)$,
$r'=r(l,u')$ then we find
\begin{equation}
 \frac{u'\cdot l}{u\cdot l}=(u-r)\cdot u'=c(k)+s(k)\,n\circ r\,.
\end{equation}
Using this and Eq.~(\ref{rlu}) for $r$ and $r'$ we find the
transformation $r\mapsto r'$ of the celestial sphere of the
$u$-observer to the sphere of the $u'$-observer:
\begin{equation}\label{trcs}
 r'=u'+\frac{r-u}{c(k)+s(k)\,n\circ r}\,.
\end{equation}
Taking the scalar product of this equation with $u$ we find, in
particular, the well-known aberration formula:
\begin{equation}
 n'\circ r'=-\,\frac{s(k)+c(k)\,n\circ r}{c(k)+s(k)\,n\circ r}\,
\end{equation}
(the difference in signs is due to the direction of $n$ and
$n'$).

A small variation of the direction of the light ray induces
small variations $\delta r$ and $\delta r'$, which are tangent
to the two respective celestial spheres. The linear
transformation $\delta r\mapsto \delta r'$ is found by varying
Eq.~(\ref{trcs}):
\begin{equation}
 \delta r'=\frac{\delta r}{c(k)+s(k)\,n\circ r}+
 \frac{n\circ\delta r}{[c(k)+s(k)\,n\circ r]^2}\,(u-r)\,.
\end{equation}
Taking now two different variations $\delta_1$ and $\delta_2$
and using the constraints $u\cdot\delta r=r\cdot\delta r=0$ we
find
\begin{equation}
 \delta_1r'\circ\delta_2r'=
 \frac{\delta_1r\circ\delta_2r}{[c(k)+s(k)\,n\circ r]^2}\,.
\end{equation}
This equation tells us that the linear transformation $\delta
r\mapsto \delta r'$ differs only by the factor
$[c(k)+s(k)\,n\circ r]^{-1}$ from an isometric transformation.
Thus locally (in the first order in $\delta r$) the picture
registered on the celestial sphere scales by this factor
without a change of the shape (the angles).\cite{ter}

Larger areas on the celestial sphere lose this scaling property
and undergo more complicated transformations. However, one
feature of the local transformation survives. To find it chose
a~spacelike vector $z$, $z^2<0$, and consider among vectors
$-l$ all those which satisfy the equation
\begin{equation}\label{circle}
 z\cdot l=0\,.
\end{equation}
Using the geometrical quantities correlated to $u$ the
spacelike character of $z$ is written down as $(u\cdot
z)^2<|z_u|^2$ and the above condition on $l$'s takes the form
\begin{equation}\label{circle_u}
 r(l,u)\circ\frac{z_u}{|z_u|}=-\frac{u\cdot z}{|z_u|}=\cos[\phi(z,u)]\,,
\end{equation}
where the last equality defines the angle $\phi(z,u)$. This
equation tells us that the vectors $r(l,u)$ are all those which
form the angle $\phi(z,u)$ with the vector $z_u/|z_u|$. Thus
they form a~circle on the celestial sphere. This fact is
independent of the choice of a~particular observer (its vector
$u$) crossing the point $O$. However, the angle $\phi(z,u)$
does depend on this choice. Note in particular that if
Eq.~(\ref{circle}) determines a~`great circle' for the observer
with four-velocity $u$ (i.e.~$\phi(z,u)=\pi/2$), this circle
will in general cease to be `great' for the one with the
four-velocity $u'$. The exceptional cases when `great' goes to
`great' are those determined by $z$ orthogonal both to $u$ and
$u'$.

To summarize, the picture obtained on the celestial sphere
undergoes deformation from one observer to another, but in such
a~way that angles are conserved and circles become circles,
although the `greatness' property is usually not conserved.
This is illustrated in Figs.~28 and 29.

\section{Acknowledgements}

I am grateful to my colleague Piotr Bizoń for his suggestion to
expand what originally was a lecture presentation into this
article, and for careful reading of the manuscript.

\section{Appendix}

\begin{theor}
The cone $V$ determines uniquely up to a~constant factor
a~symmetric bilinear form $g$ such that\quad $x\in V\ \iff\
g(x,x)=0$.
\end{theor}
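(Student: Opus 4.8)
Existence is immediate: the Minkowski form $q_0(x)=(x^0)^2-(x^1)^2-(x^2)^2-(x^3)^2$ is symmetric and, by Eq.~(\ref{cone}), vanishes exactly on $V$. The entire content of the statement is therefore uniqueness, so the plan is to show that any symmetric bilinear form $g$ with the stated vanishing property is forced to be a scalar multiple of $q_0$.

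First I would fix the canonical basis of Eq.~(\ref{cone}) and write $g(x,x)=\sum_{\mu,\nu}g_{\mu\nu}x^\mu x^\nu$ with $g_{\mu\nu}=g_{\nu\mu}$. The key idea is to feed the hypothesis only the null vectors, which I parametrize up to scale as $x=(1,\mathbf{n})$ with $\mathbf{n}=(n^1,n^2,n^3)$ running over the unit sphere $\sum_i(n^i)^2=1$. The single condition ``$g(x,x)=0$ for $x\in V$'' then becomes the requirement that
\[
 g_{00}+2\sum_ig_{0i}n^i+\sum_{i,j}g_{ij}n^in^j=0
\]
holds for every unit vector $\mathbf{n}$.

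The crucial step is a parity argument: replacing $\mathbf{n}$ by $-\mathbf{n}$ leaves the constant and quadratic terms untouched but flips the sign of the linear one. Adding and subtracting the two resulting identities decouples them, yielding $\sum_ig_{0i}n^i=0$ and $g_{00}+\sum_{i,j}g_{ij}n^in^j=0$ for all unit $\mathbf{n}$. The first forces $g_{0i}=0$; the second says the spatial quadratic form is constant on the sphere. Evaluating it on the basis directions $\mathbf{n}=e_i$ gives $g_{ii}=-g_{00}$ for each $i$, and on $\mathbf{n}=(e_i+e_j)/\sqrt2$ gives $g_{ij}=0$ for $i\neq j$. Hence $g_{\mu\nu}=g_{00}\,\mathrm{diag}(1,-1,-1,-1)_{\mu\nu}$, i.e. $g=\lambda q_0$ with $\lambda=g_{00}$.

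Finally I would record that $\lambda\neq0$: were $\lambda=0$, the form $g$ would vanish identically and $g(x,x)=0$ would hold for \emph{every} $x$, contradicting the reverse implication $g(x,x)=0\Rightarrow x\in V$. I do not expect a genuine obstacle here; the only point demanding care is confirming that testing $g$ on null vectors alone already pins down all ten of its components, which is precisely what the parity split followed by the sphere-constancy argument accomplishes. A slicker but less elementary alternative would note that $q_0$ is an irreducible real quadratic polynomial with real zero set $V$, so any quadratic $g(x,x)$ vanishing on $V$ is divisible by $q_0$ and, being of equal degree, equals $\lambda q_0$; for this audience I would nonetheless prefer the self-contained computation above.
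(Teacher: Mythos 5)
Your proof is correct and takes essentially the same route as the paper's: work in the canonical basis, impose $g(x,x)=0$ on the null cone, and use the two sign branches (your parity split $\mathbf{n}\mapsto-\mathbf{n}$ is exactly the paper's $\pm$ in $x^0=\pm\sqrt{\sum_i(x^i)^2}$) to force $g_{0i}=0$ and $g_{ik}=-g_{00}\delta_{ik}$. Your closing remark that $g_{00}\neq0$ is needed for the reverse implication $g(x,x)=0\Rightarrow x\in V$ is a small point the paper leaves implicit, and a worthwhile addition.
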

\begin{proof}
In a~canonical basis $V$ takes the form given in
Eq.~(\ref{cone}), which is equivalent to
\mbox{$x^0=\pm\sqrt{\sum_{i=1}^3(x^i)^2}$}. If this implies
$g(x,x)=\sum_{\mu,\nu=0}^3g_{\mu\nu}x^\mu x^\nu=0$, then the
conditions
\begin{equation*}
 \lp g_{ik}+g_{00}\delta_{ik}\rp x^ix^k \pm 2g_{0i}
 {\textstyle\sqrt{\sum_{i=1}^3(x^i)^2}\, x^i}=0
\end{equation*}
must be satisfied identically (for any numbers $x^i$,
$i=1,2,3$). Thus \mbox{$g_{0i}=0$}, $i=1,2,3$, and
$g_{ik}+g_{00}\delta_{ik}=0$, $i,k=1,2,3$. Therefore in this
frame \mbox{$g(x,y)=g_{00}\lp x^0y^0-\sum_{i=1}^3x^iy^i\rp$}.
\end{proof}

 \eject
 \vspace*{-3cm}
\begin{center}
\includegraphics[width=.7\textwidth]{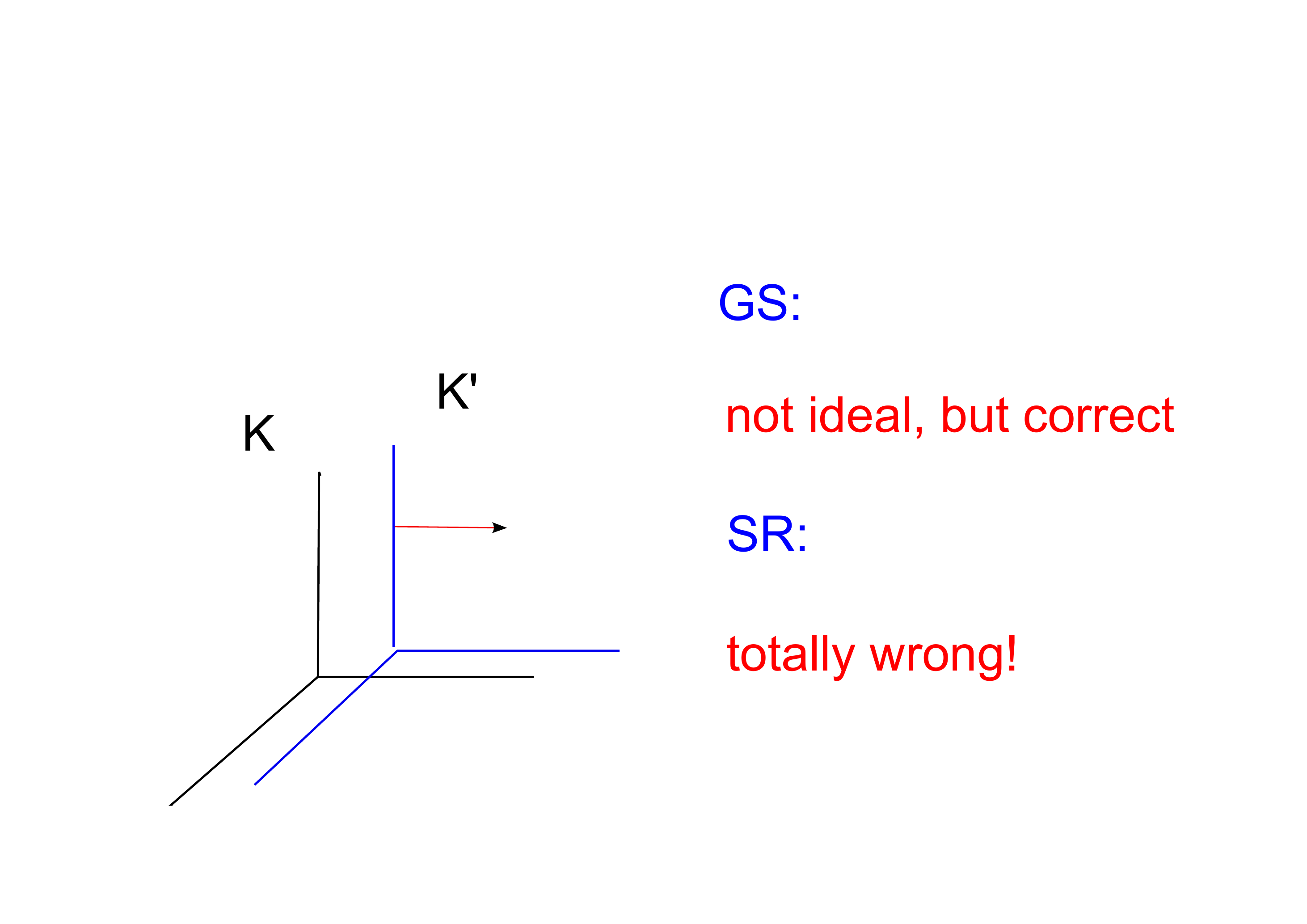}\vspace{-.5cm}\\
Fig.~1. Reference frames -- a popular picture.\\
\includegraphics[width=.7\textwidth]{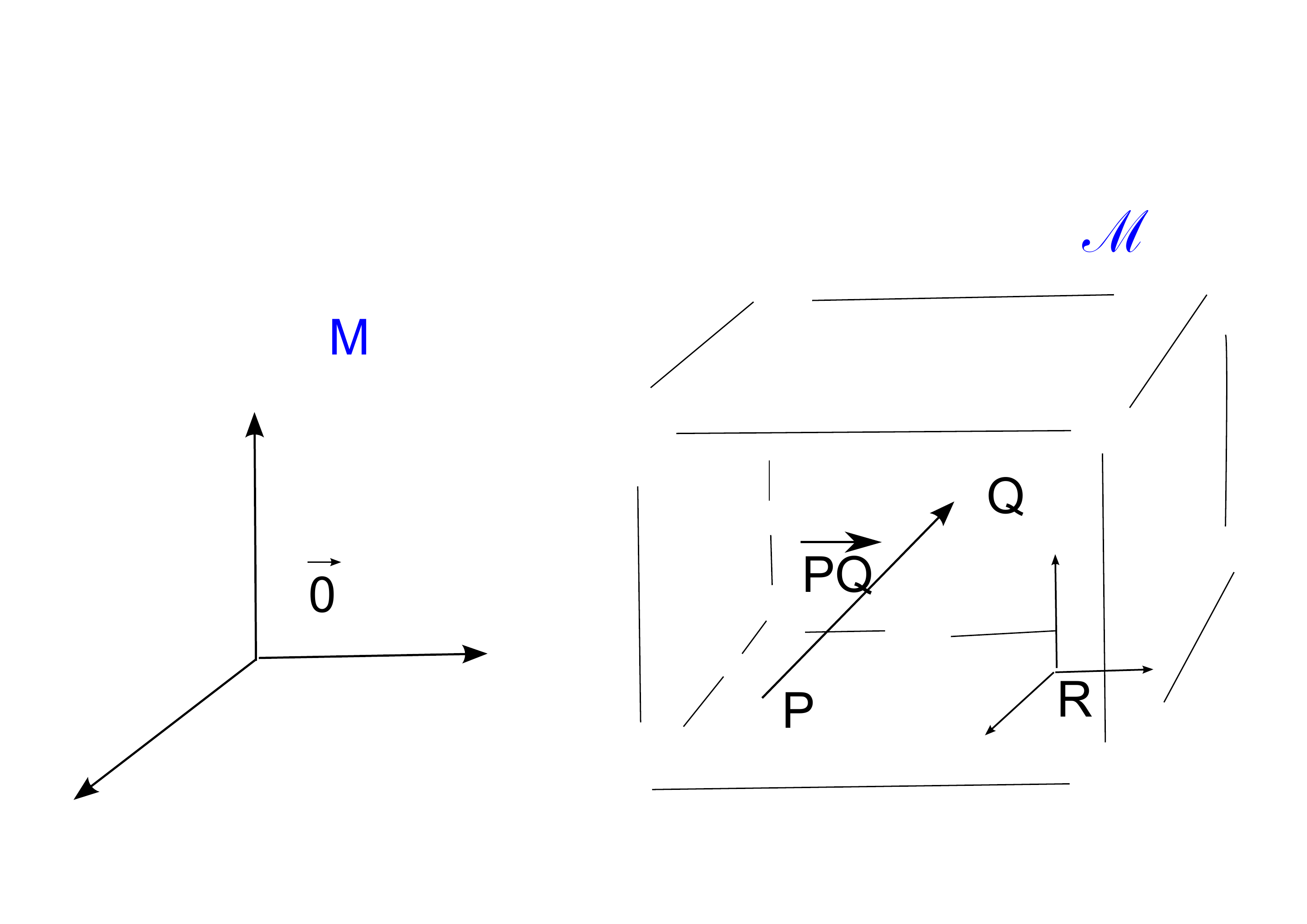}\vspace{-.5cm}\\
 Fig.~2. Vector and affine space.\\
\includegraphics[width=.7\textwidth]{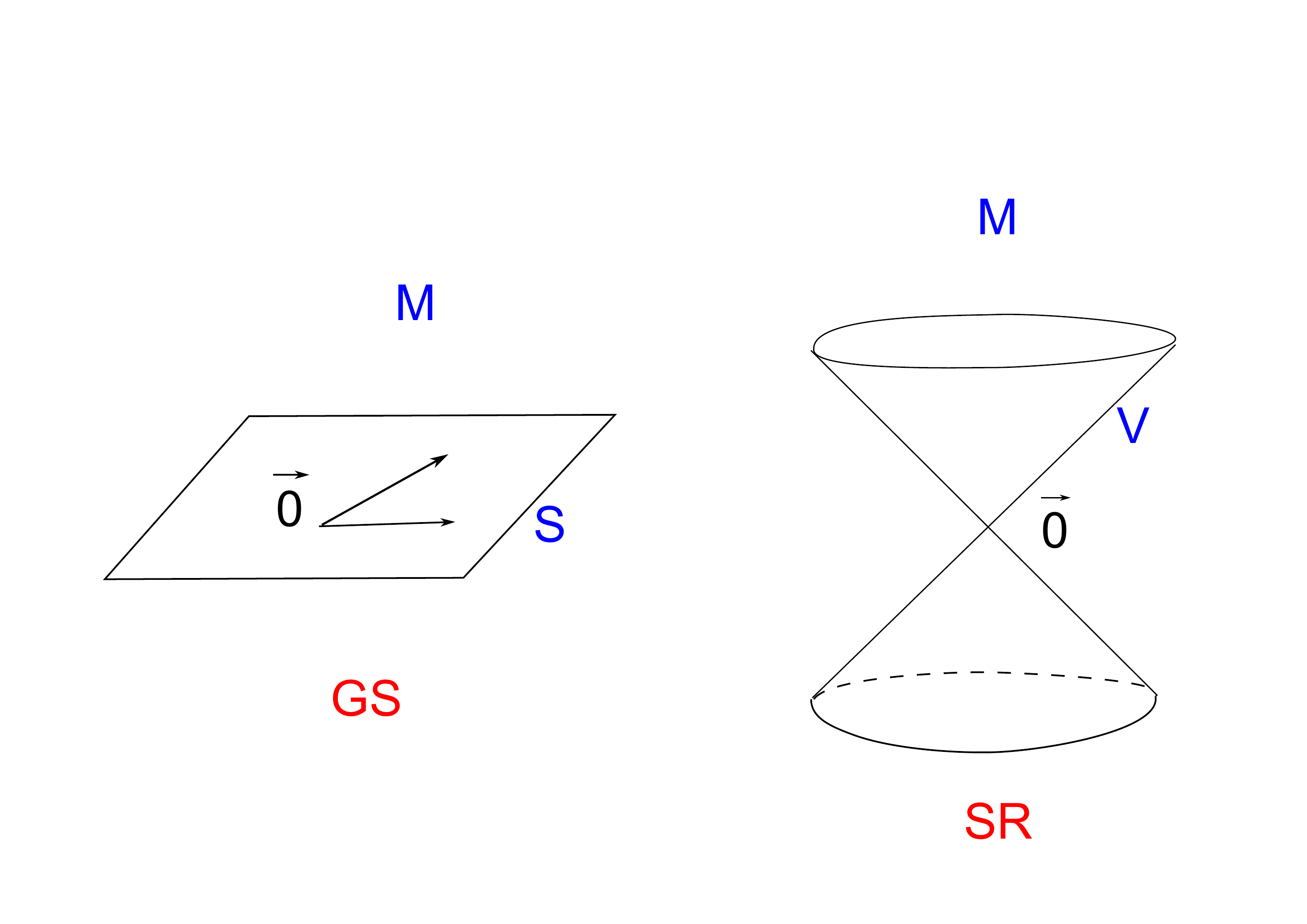}\vspace{-.5cm}\\
Fig.~3. Causal structure.
 \eject

 \vspace*{-3cm}
\includegraphics[width=.7\textwidth]{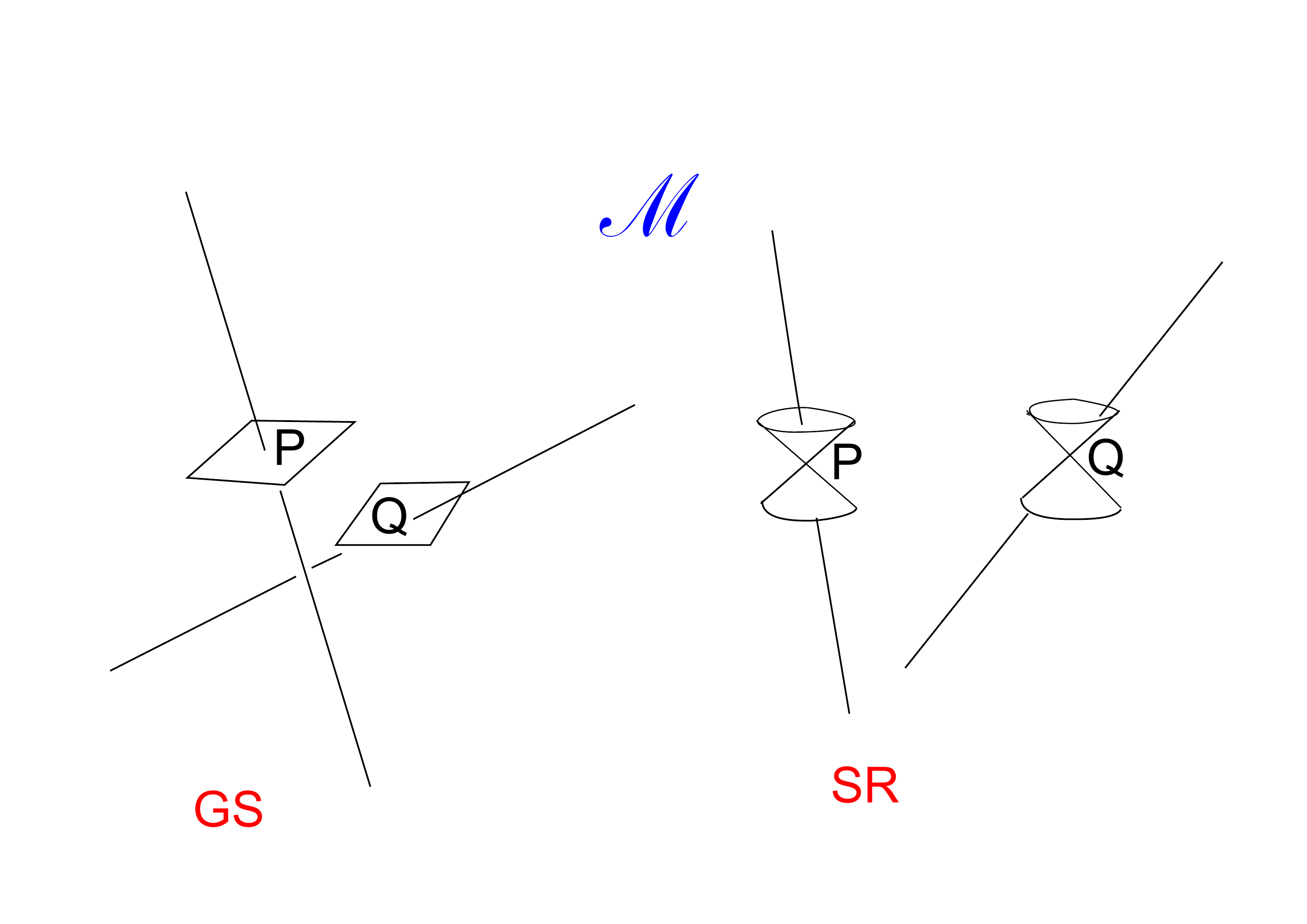}\vspace{-.5cm}\\
Fig.~4. Inertial motions.\\
\includegraphics[width=.7\textwidth]{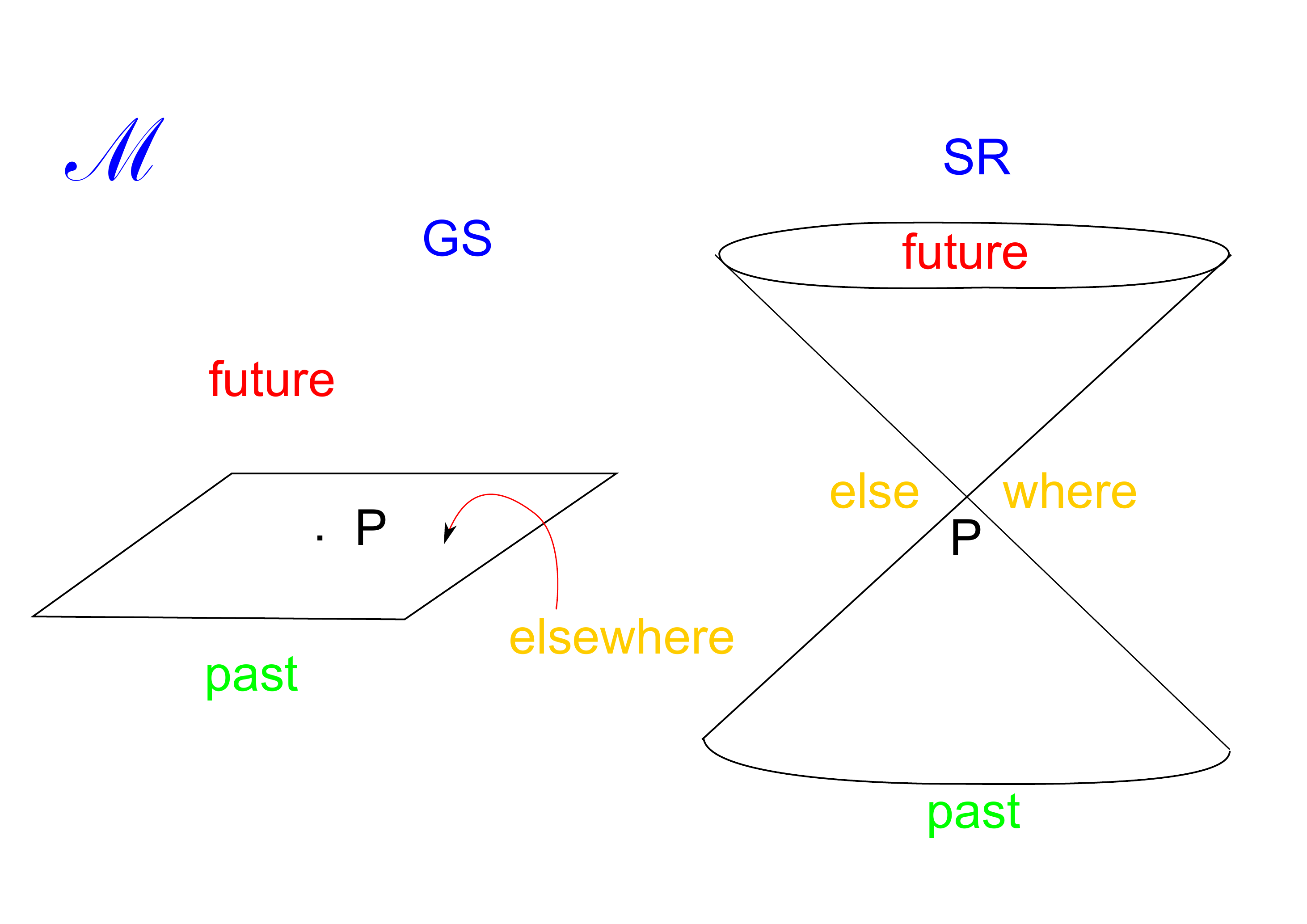}\vspace{-.5cm}\\
Fig.~5. Past, future, elswhere.\\
\includegraphics[width=.7\textwidth]{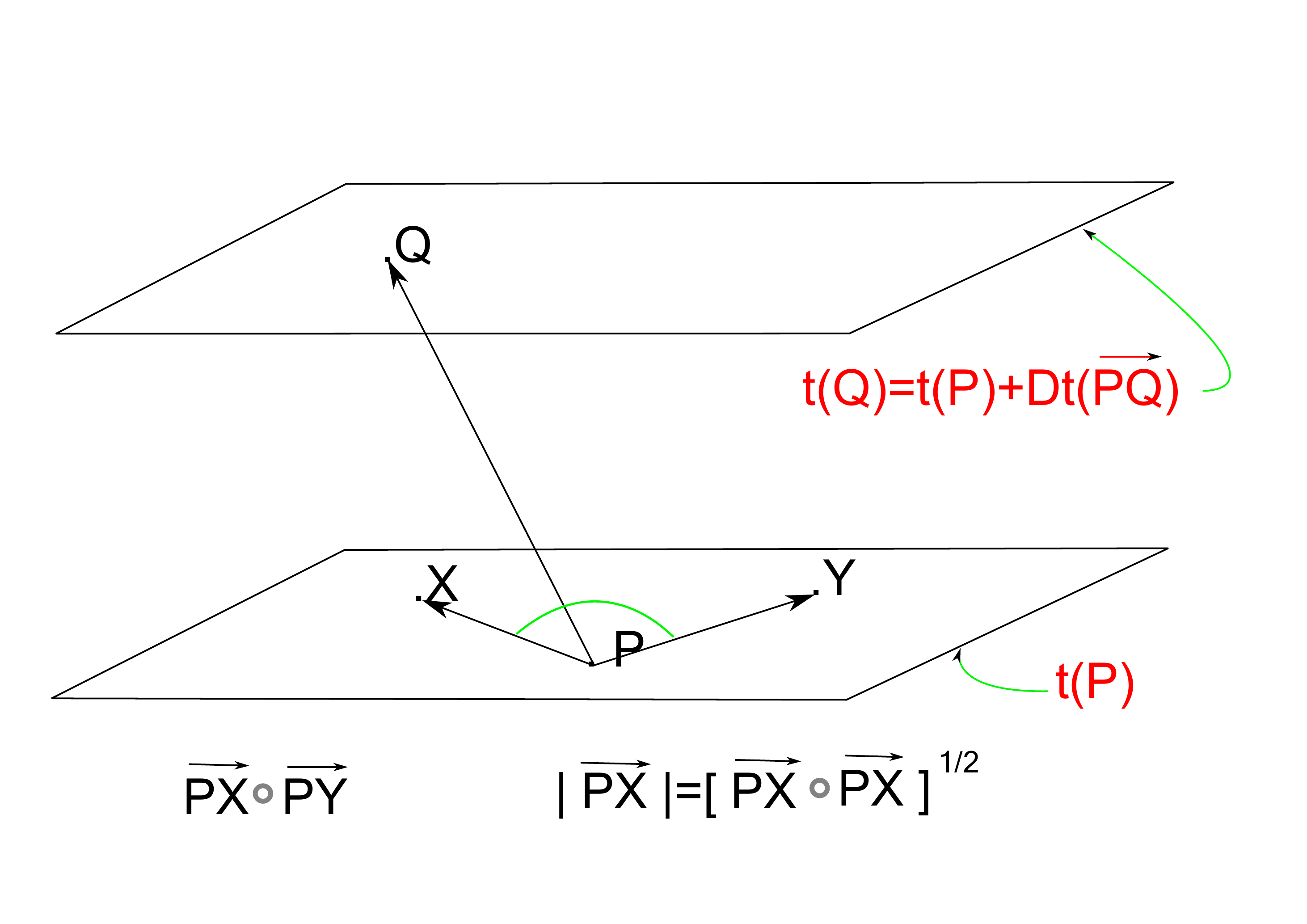}\vspace{-.5cm}\\
Fig.~6. Metric structure of GS.
 \eject

 \vspace*{-3cm}
\includegraphics[width=.7\textwidth]{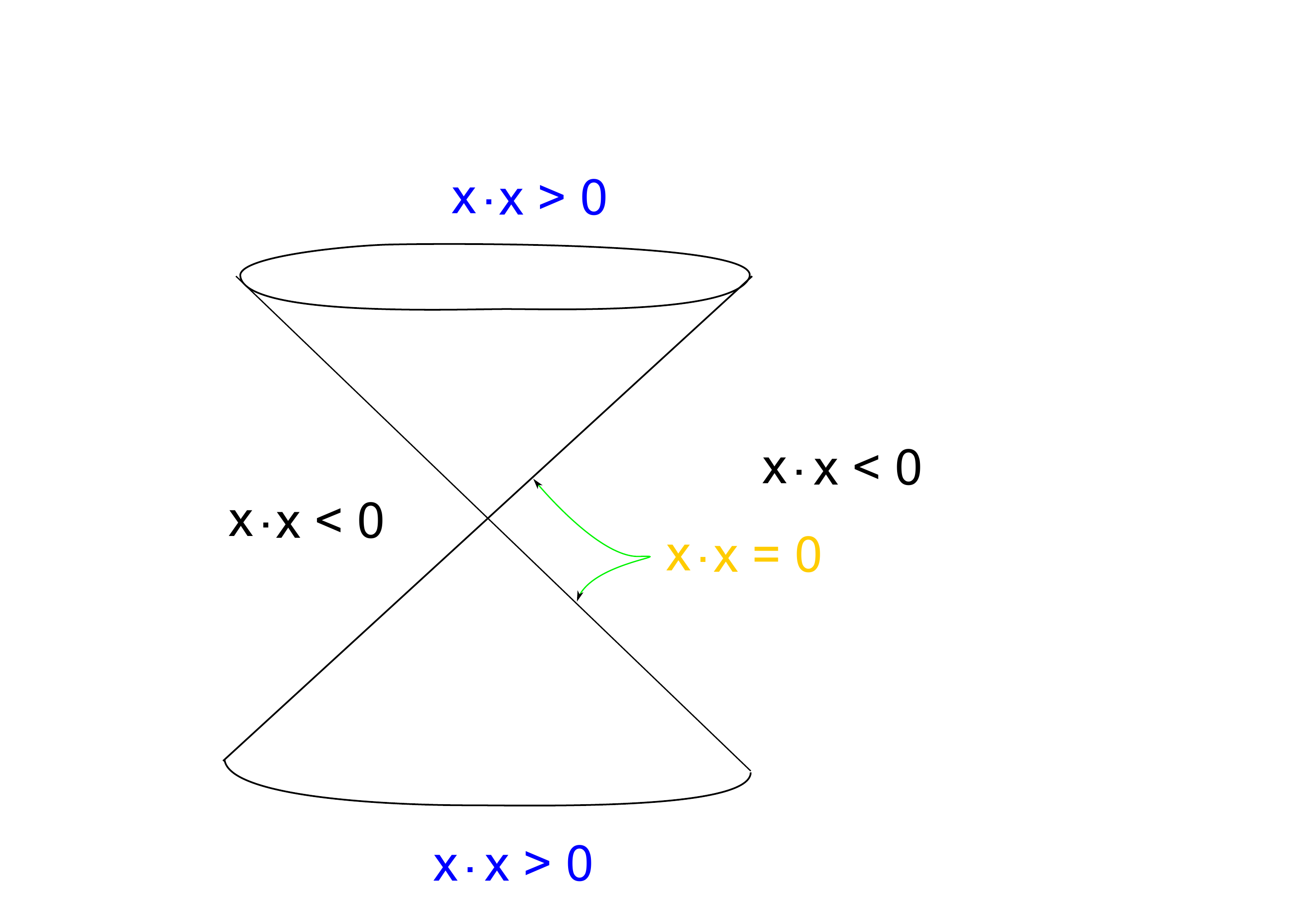}\vspace{-.5cm}\\
Fig.~7. Scalar product in SR.\\
\includegraphics[width=.7\textwidth]{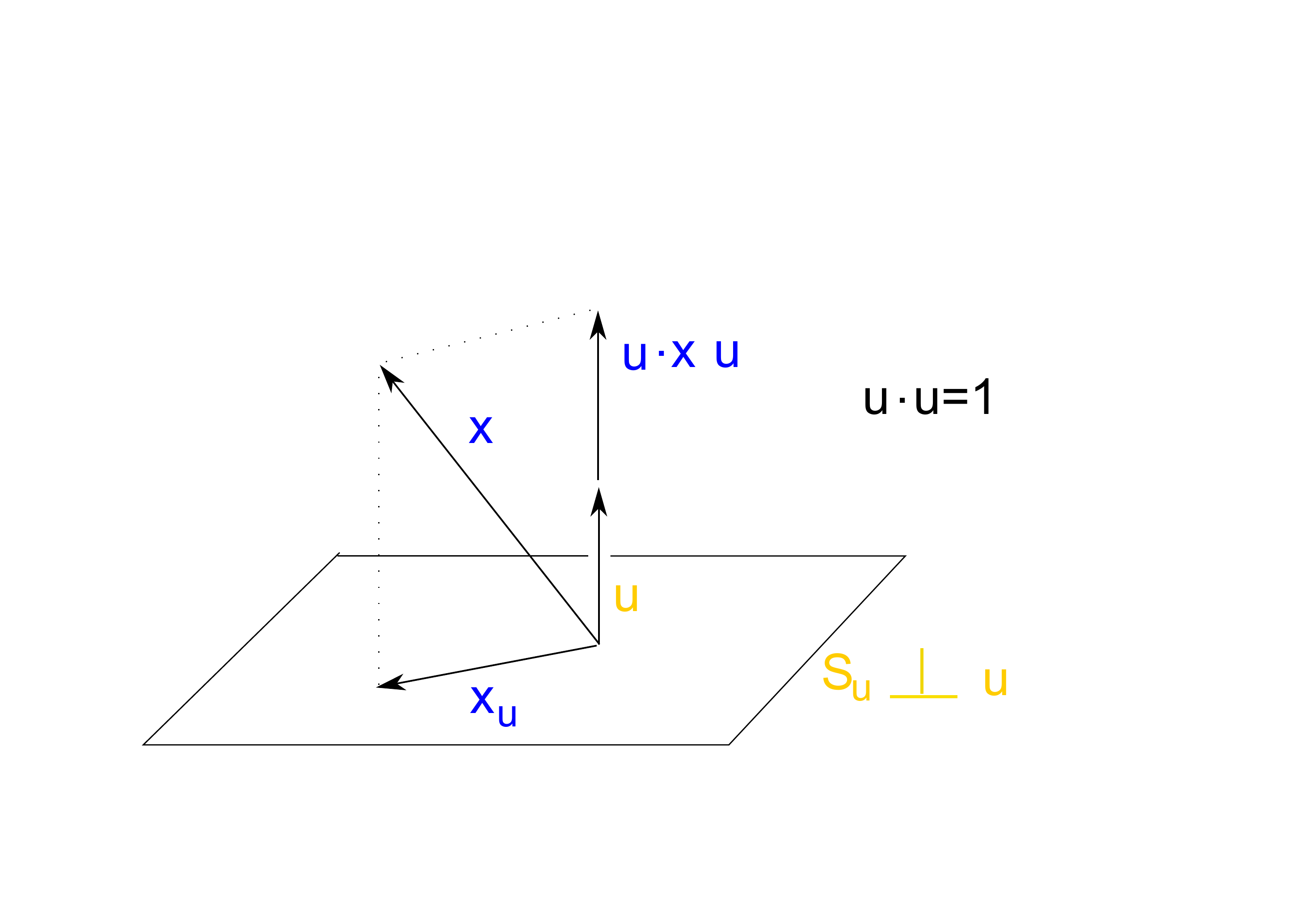}\vspace{-.5cm}\\
Fig.~8. Metric structure of SR.\\
\includegraphics[width=.7\textwidth]{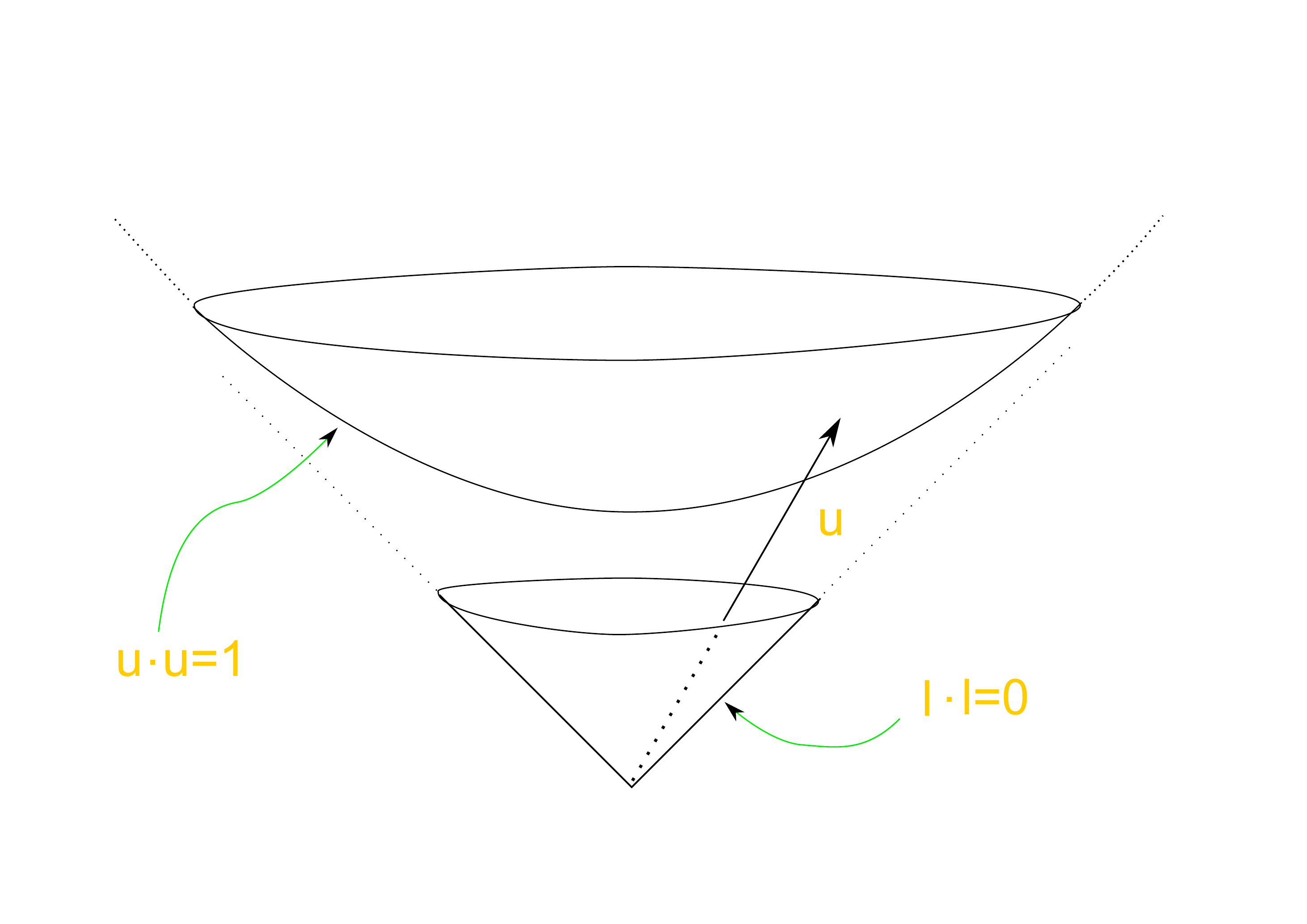}\vspace{-.5cm}\\
Fig.~9. Four-velocities and future-directed lightvectors.
 \eject

 \vspace*{-3cm}
\includegraphics[width=.7\textwidth]{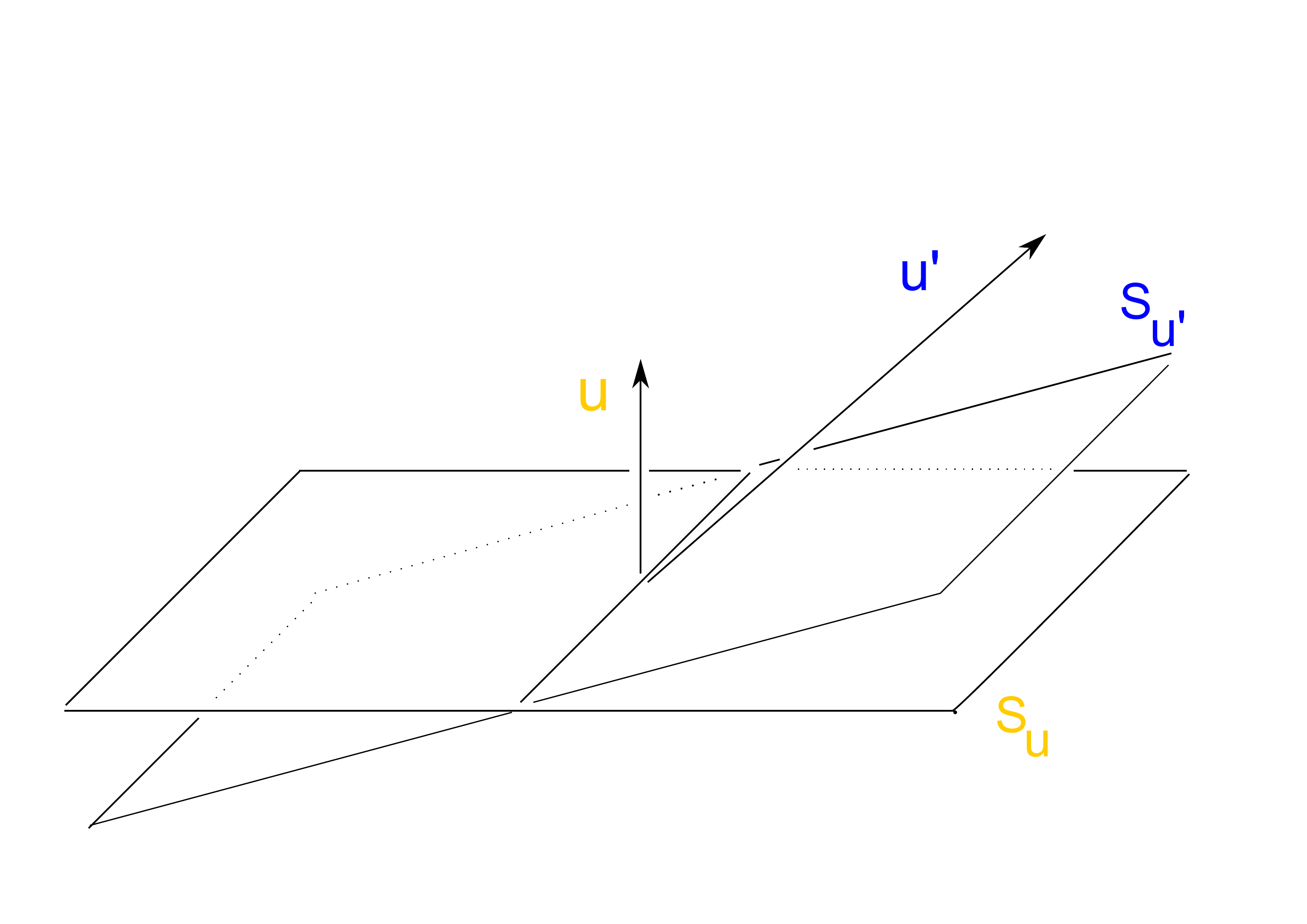}\vspace{-.5cm}\\
Fig.~10. Subspaces orthogonal to 4-velocities.\\
\includegraphics[width=.7\textwidth]{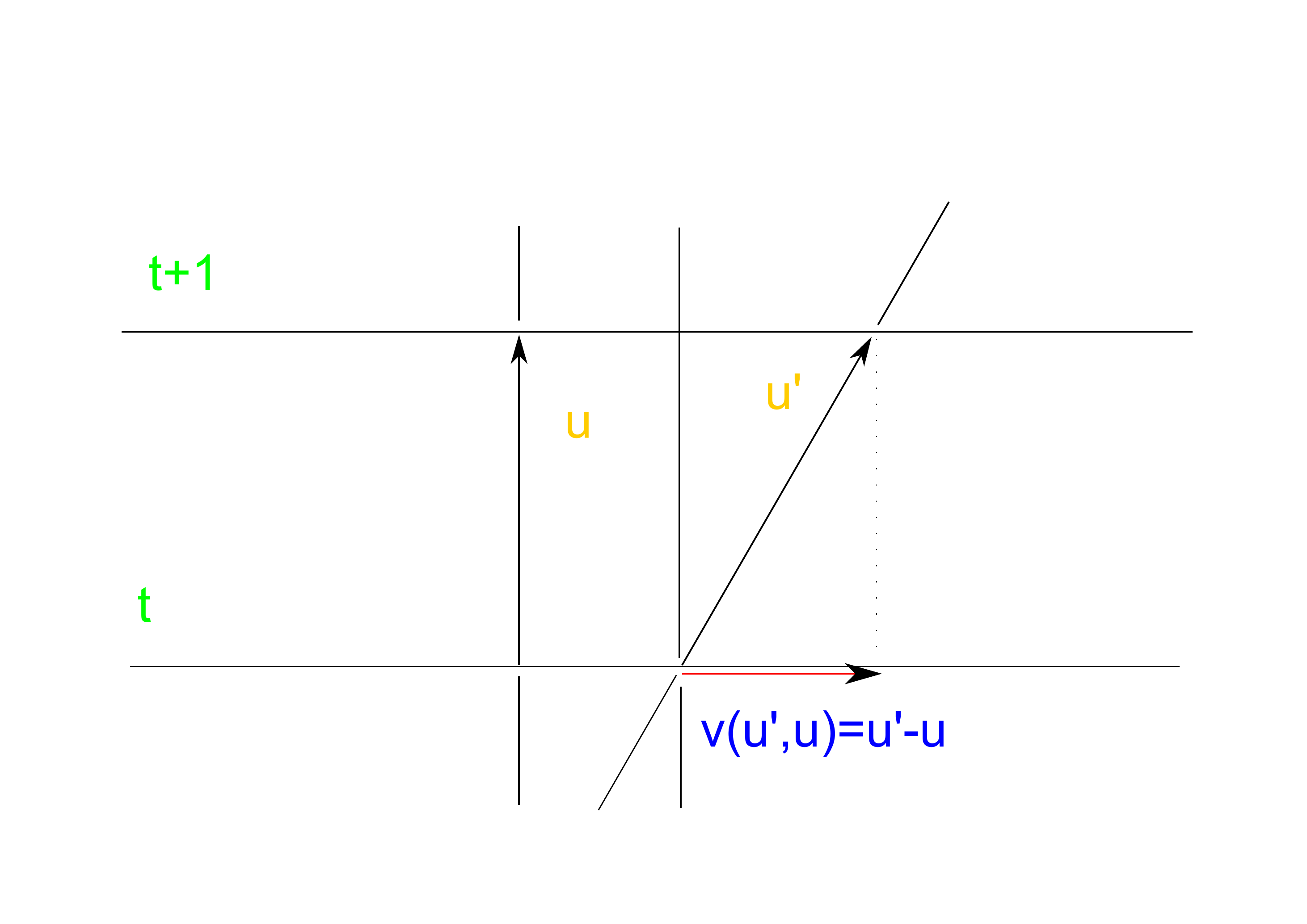}\vspace{-.5cm}\\
Fig.~11. Relative velocity in GS.\\
\includegraphics[width=.7\textwidth]{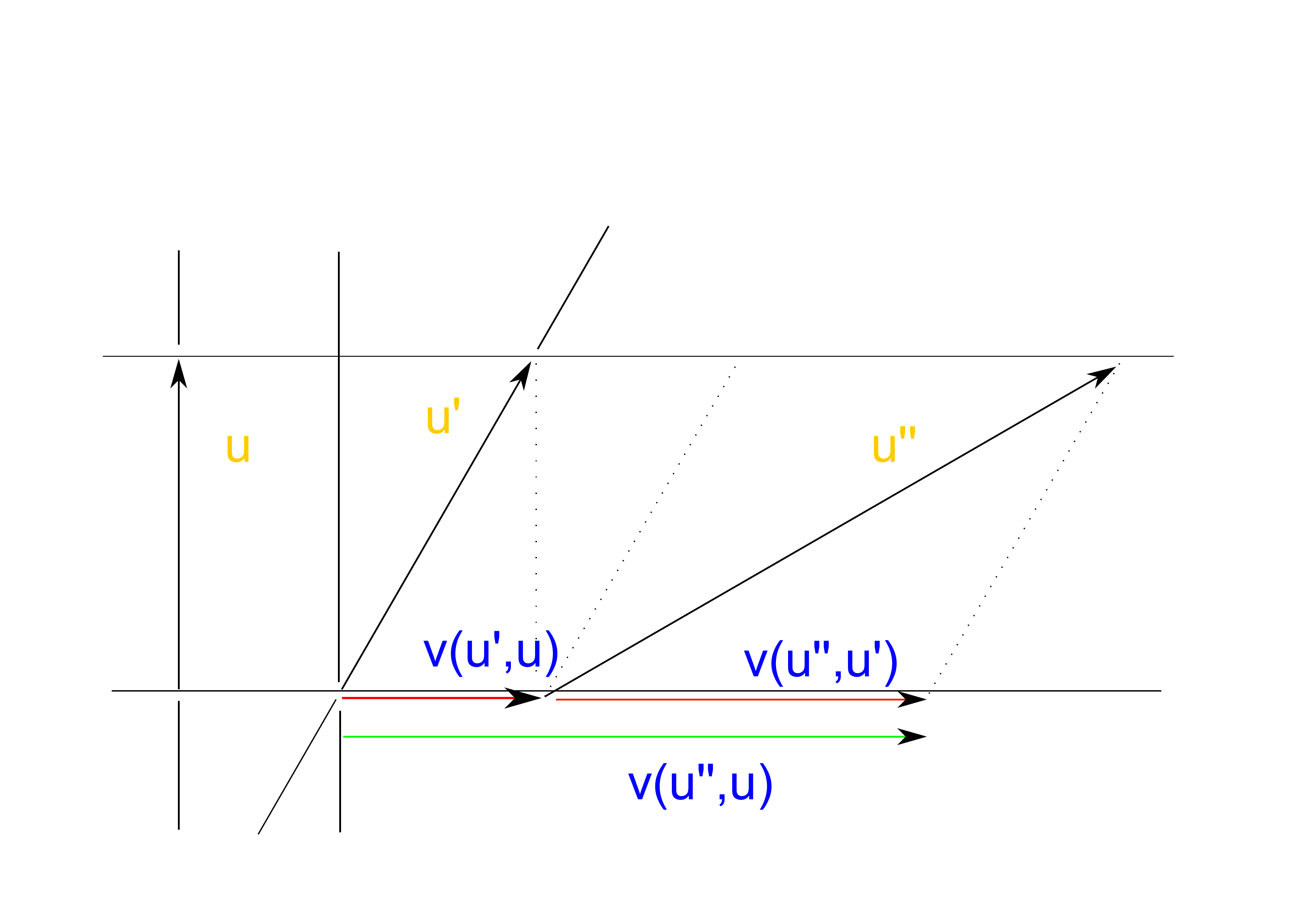}\vspace{-.5cm}\\
Fig.~12. Composition of velocities in GS.
 \eject

 \vspace*{-3cm}
\includegraphics[width=.7\textwidth]{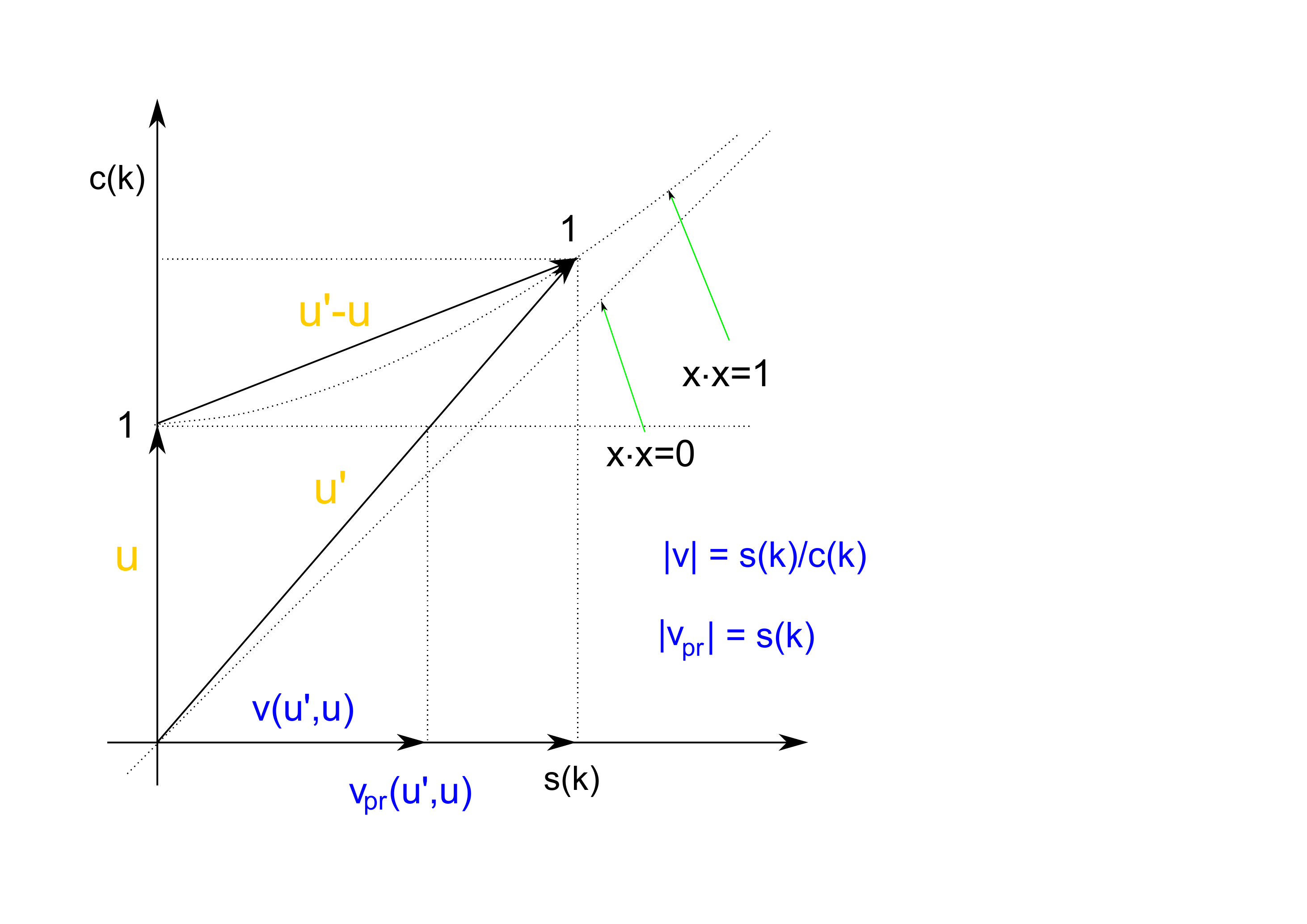}\vspace{-.5cm}\\
Fig.~13. Relative velocity in SR.\\
\includegraphics[width=.7\textwidth]{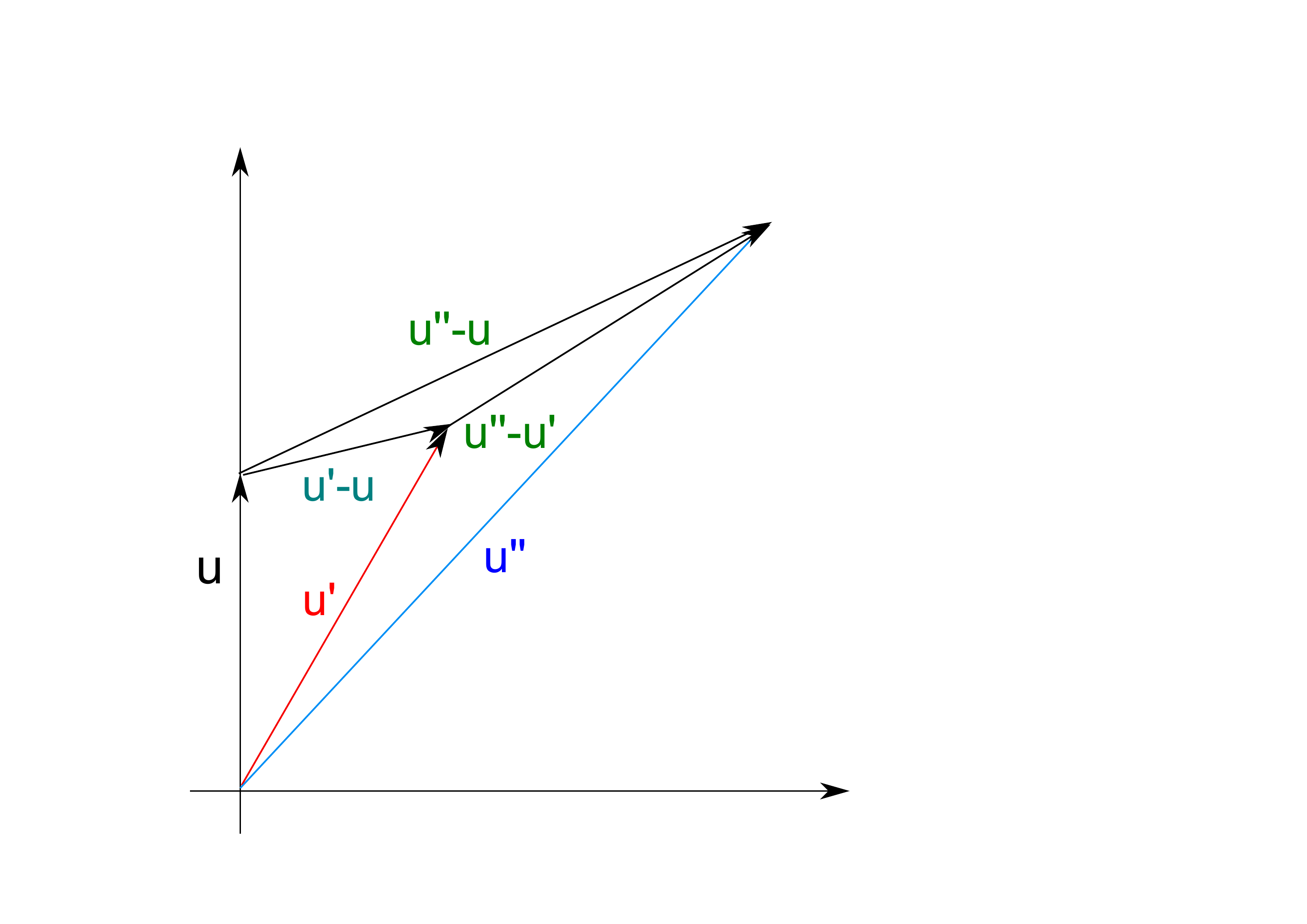}\vspace{-.5cm}\\
Fig.~14. Composition of velocities in SR.\\
\includegraphics[width=.7\textwidth]{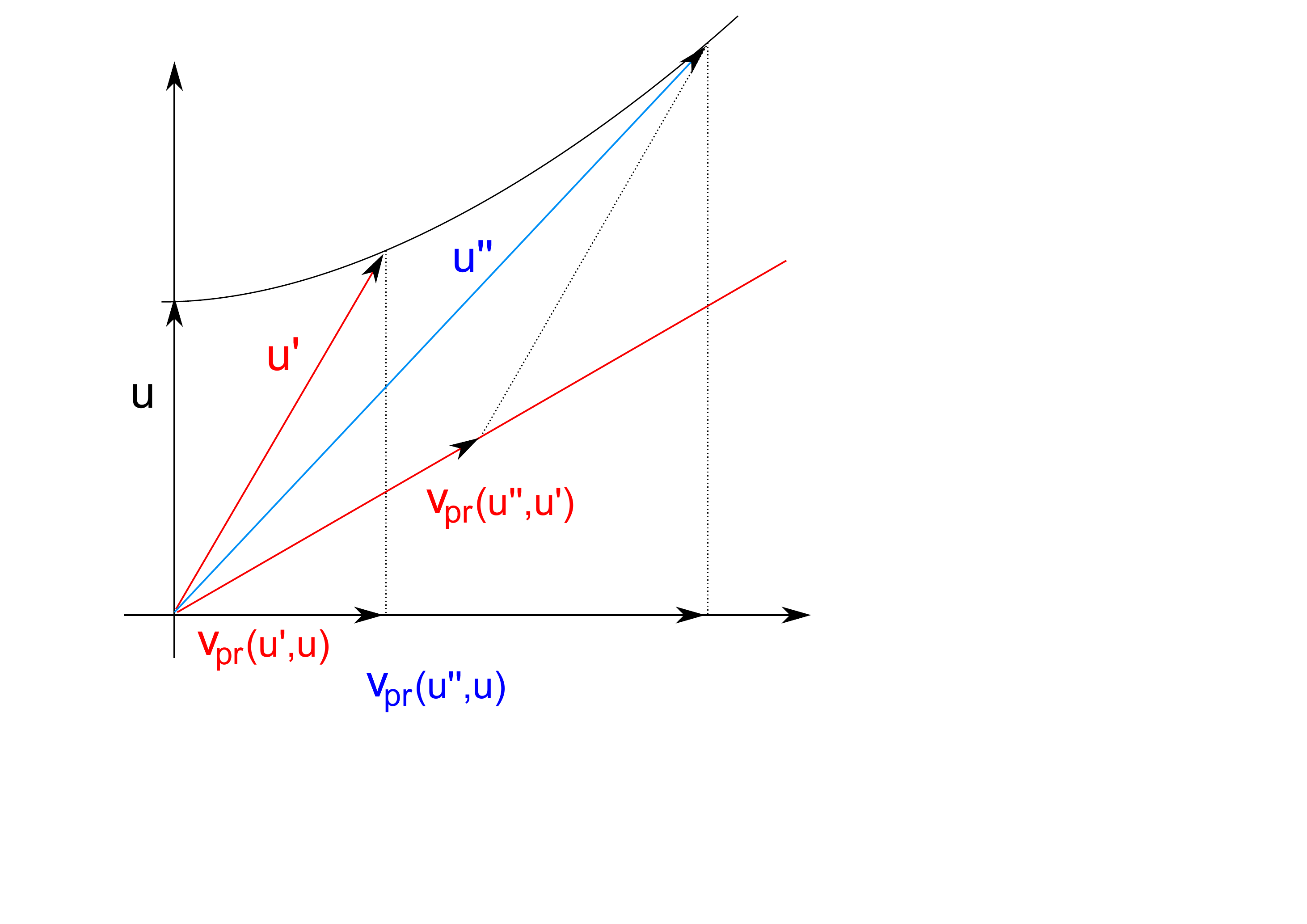}\vspace{-.5cm}\\
Fig.~15. Proper velocities in SR.
 \eject

 \vspace*{-3cm}
\includegraphics[width=.7\textwidth]{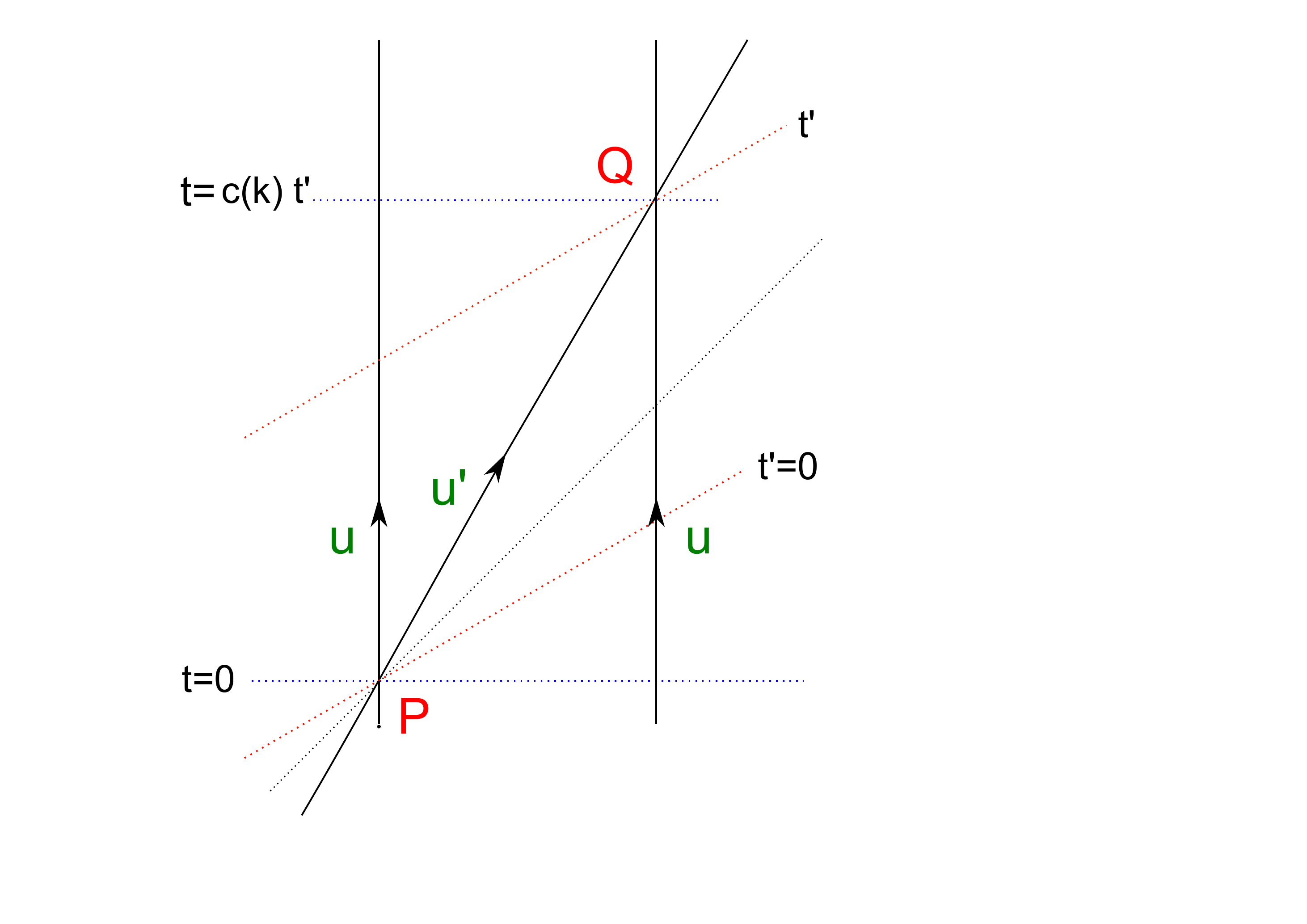}\vspace{-.5cm}\\
Fig.~16. Time measurement.\\
\includegraphics[width=.7\textwidth]{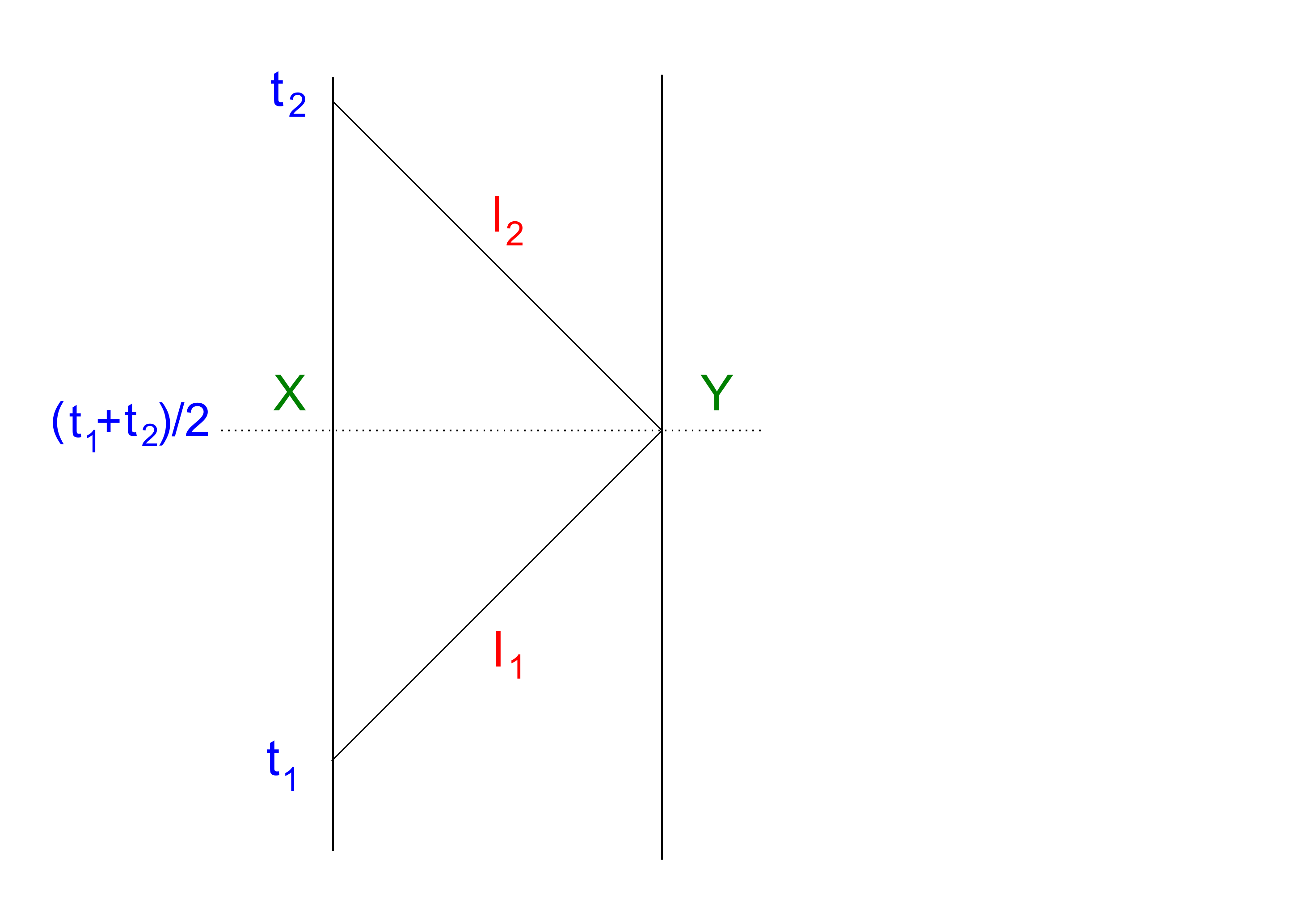}\vspace{-.5cm}\\
Fig.~17. Synchronization of clocks. \\
\includegraphics[width=.7\textwidth]{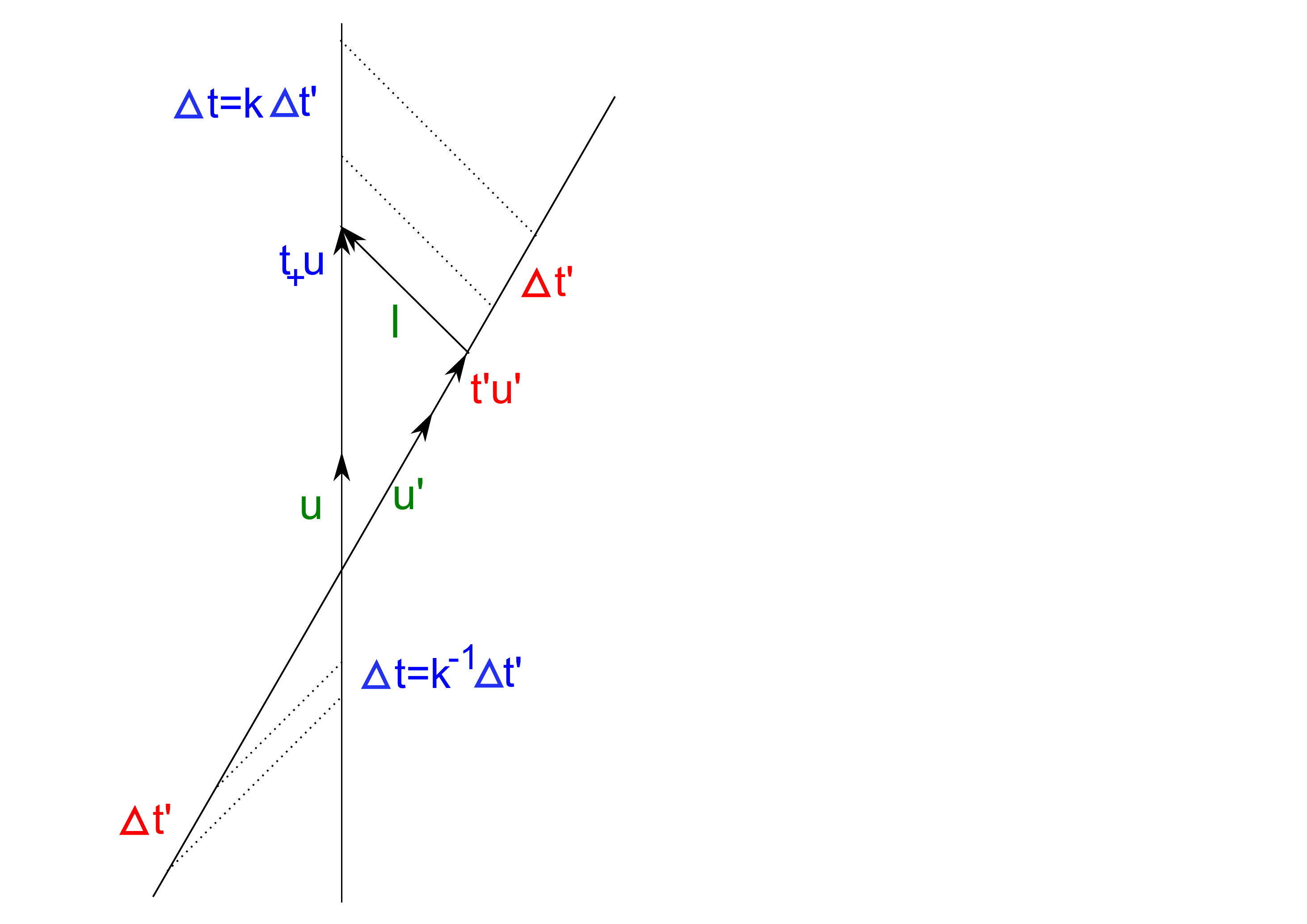}\vspace{-.5cm}\\
Fig.~18. Time of arrival of light signals.

 \eject

 \vspace*{-3.5cm}
\includegraphics[width=.7\textwidth]{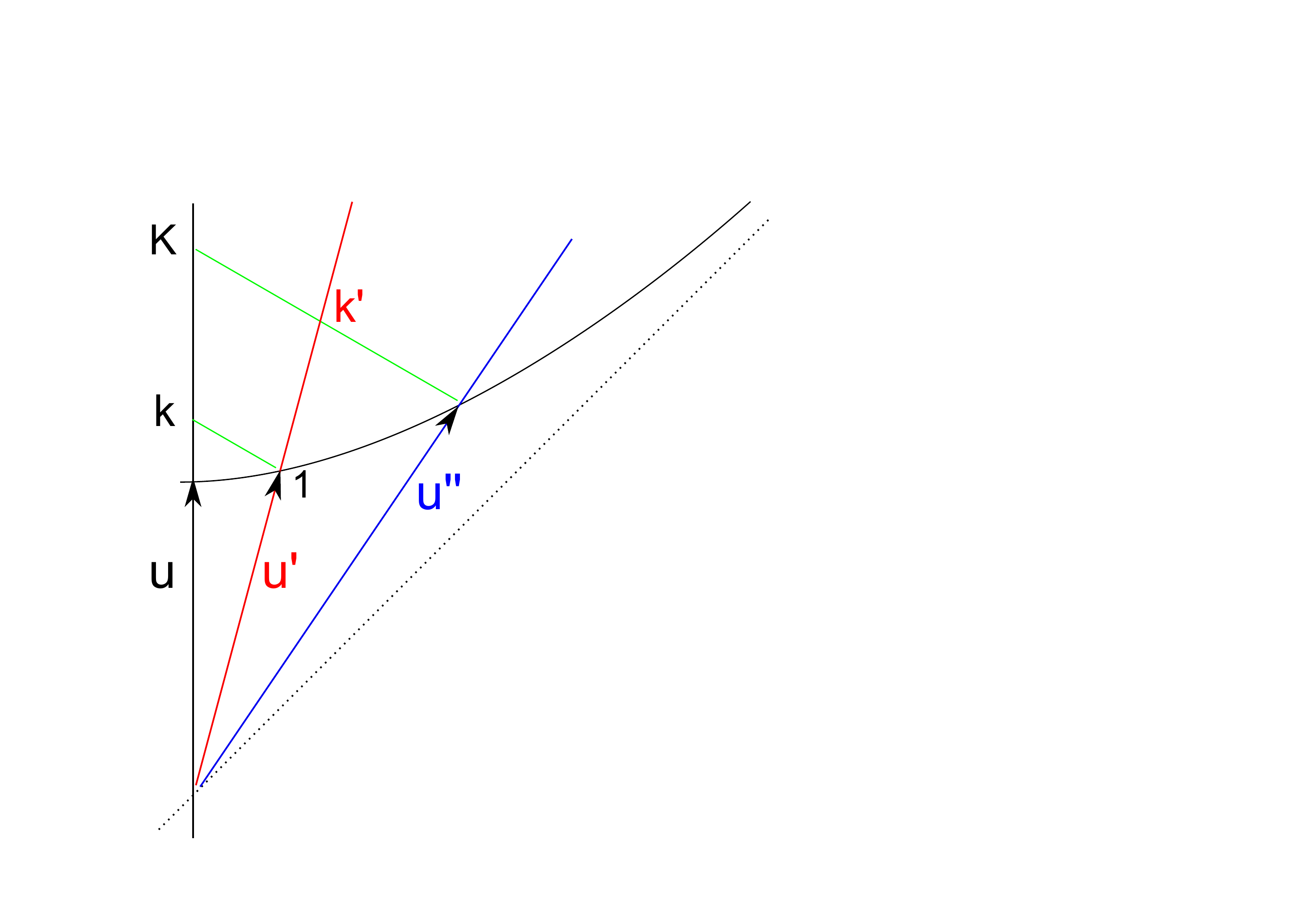}\vspace{-.5cm}\\
Fig.~19. Composition of $k$-coefficients for co-planar
four-velocities: $k/1=K/k'$, so $K=kk'$.\\
\includegraphics[width=.7\textwidth]{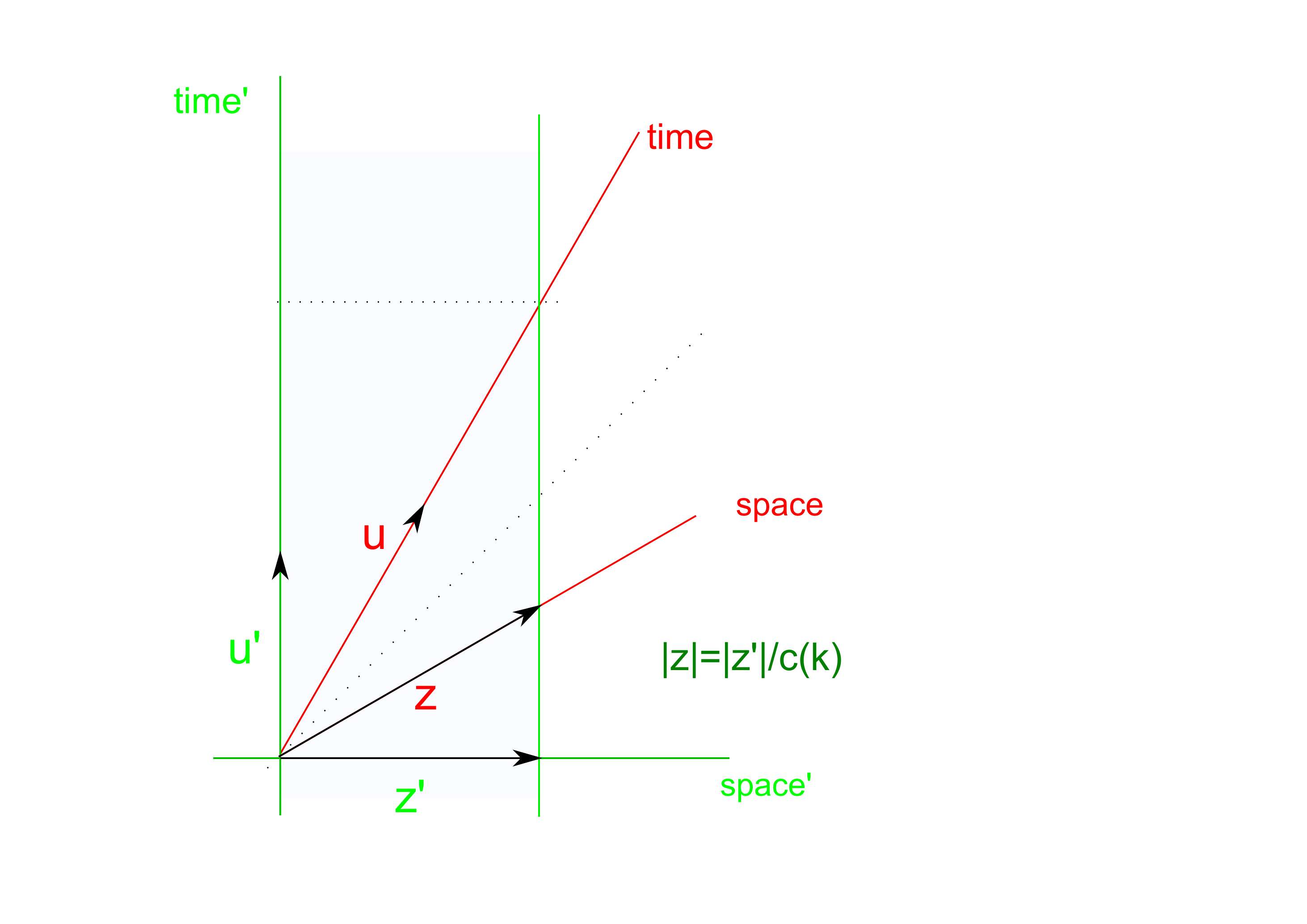}\vspace{-.5cm}\\
Fig.~20. Space measurement.
\includegraphics[width=.7\textwidth]{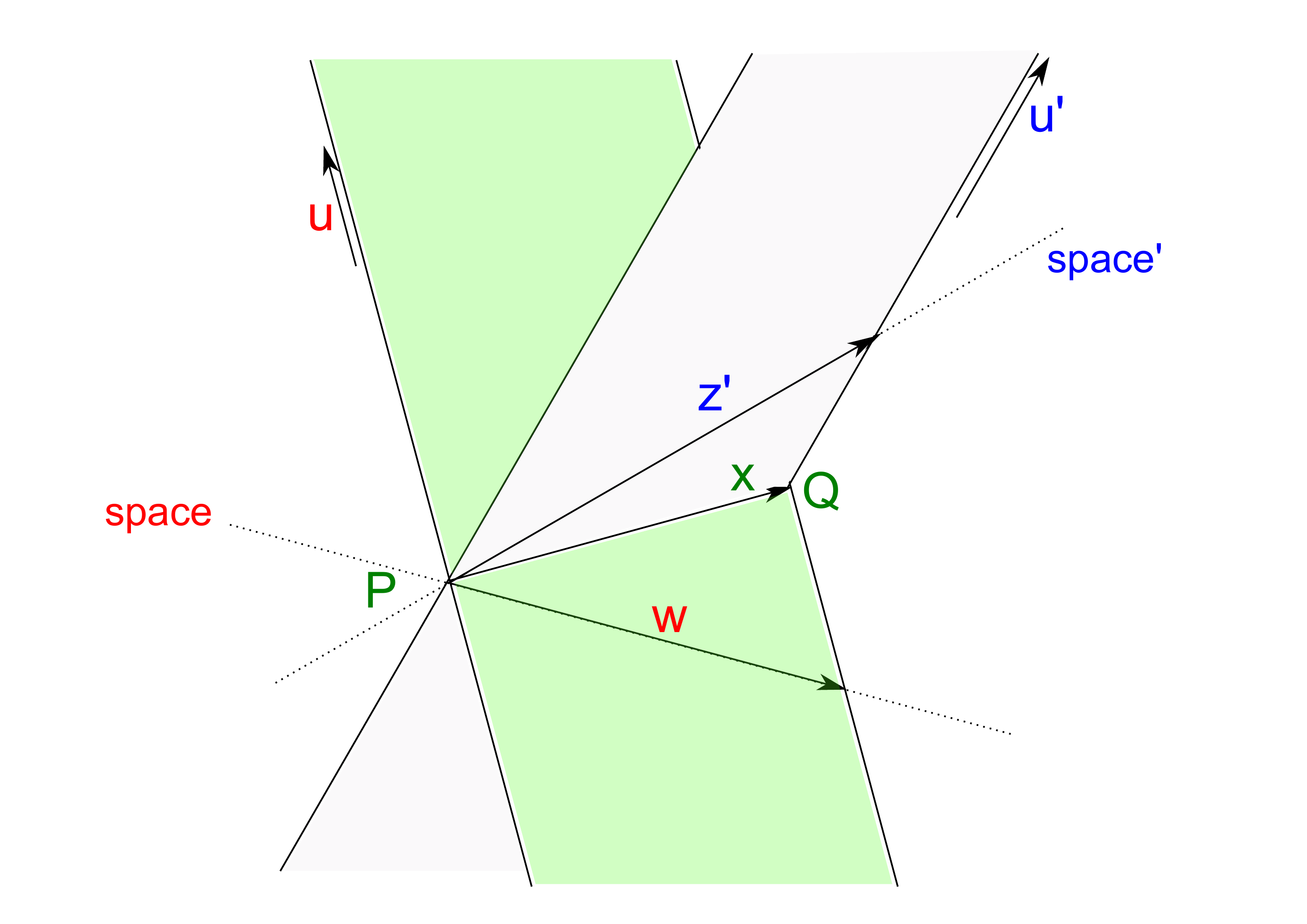}\vspace{-.5cm}\\
Fig.~21. Two rods with ends meeting at $P$ and $Q$
respectively.

 \eject
\vspace*{-3cm}
\includegraphics[width=.7\textwidth]{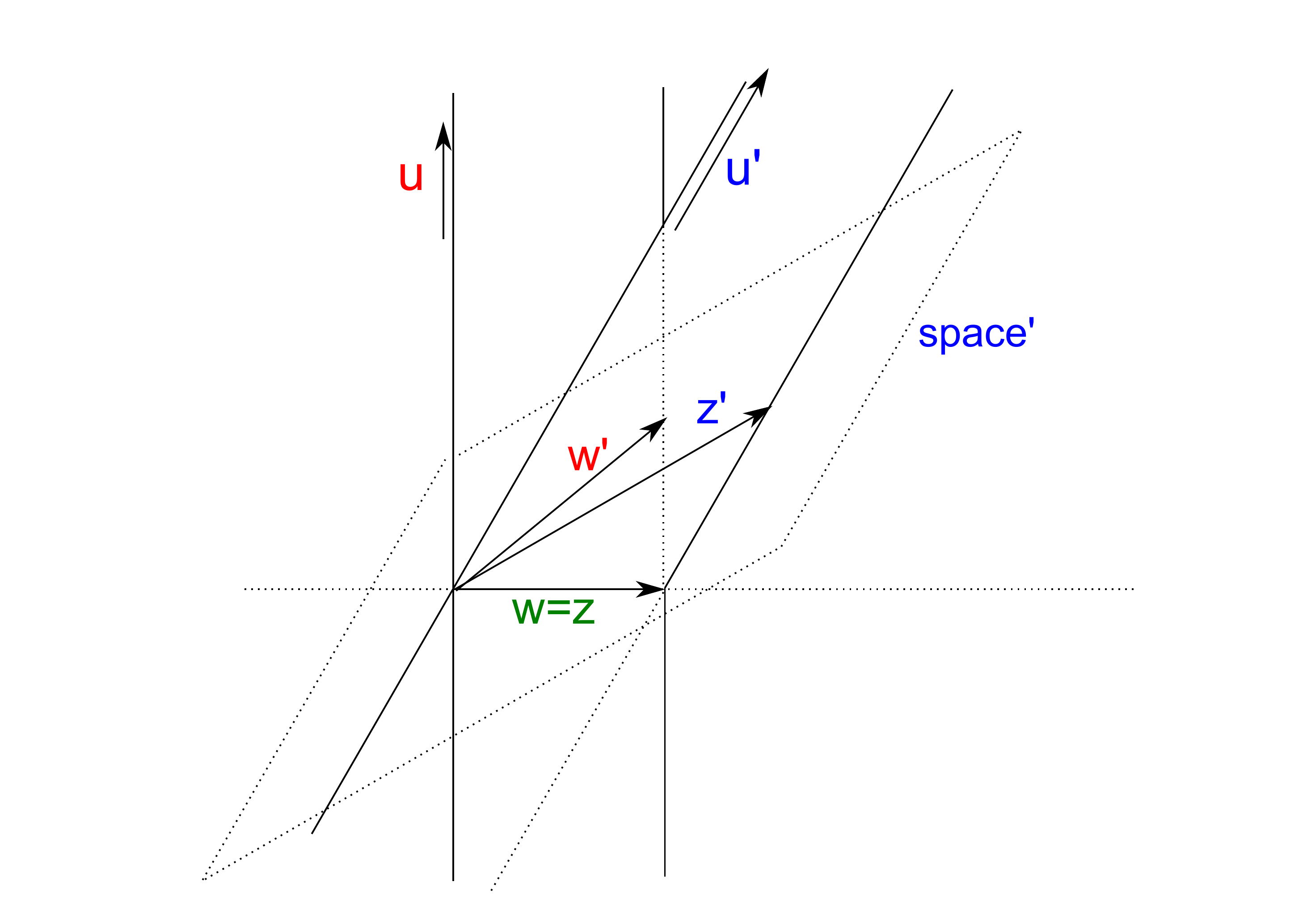}\vspace{-.5cm}\\
Fig.~22. Two rods moving parallelly in $u$-frame, and askew in
$u'$-frame.\\
\includegraphics[width=.7\textwidth]{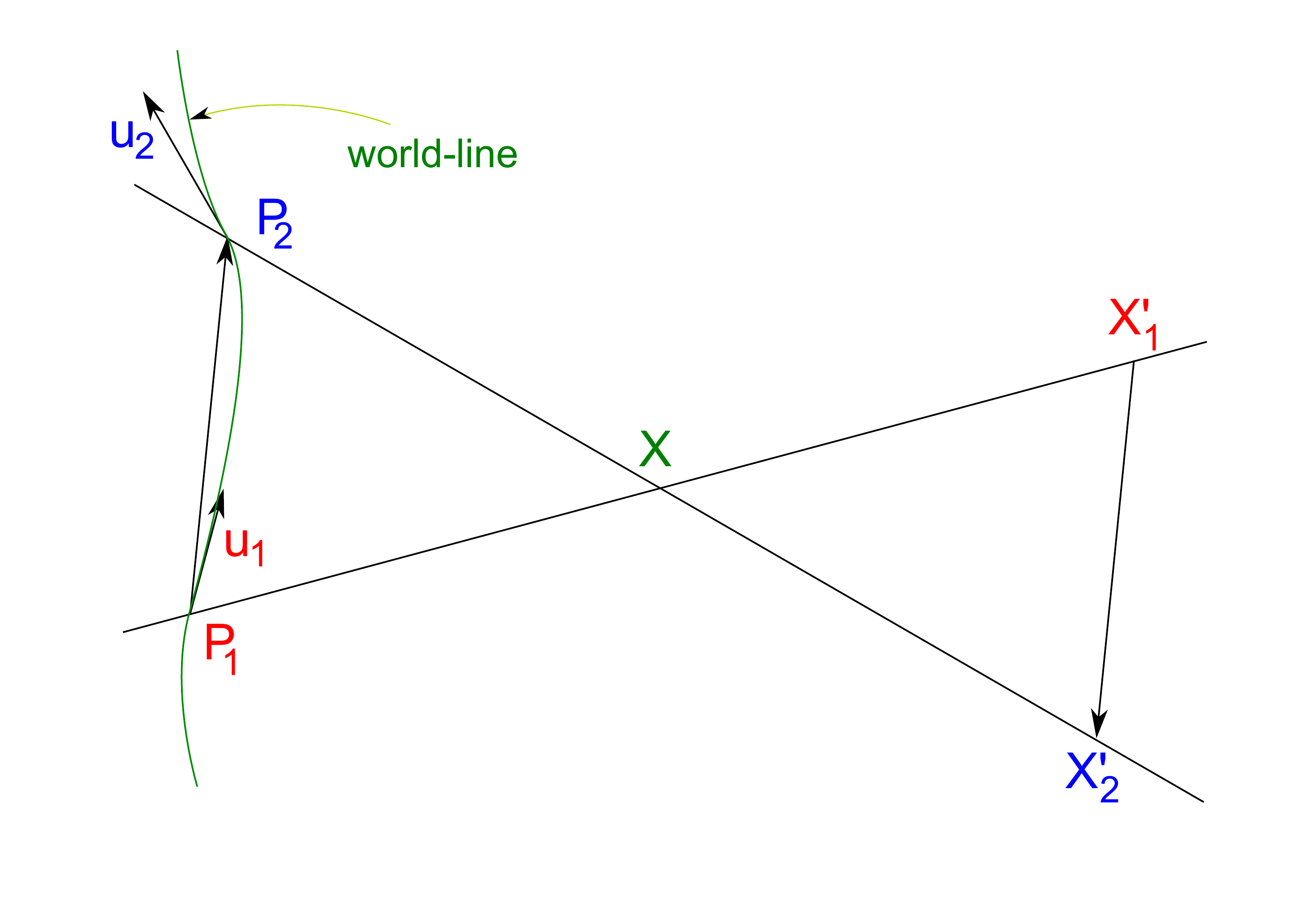}\vspace{-.5cm}\\
Fig.~23. Accelerated motion and simultaneity. \\
\includegraphics[width=.7\textwidth]{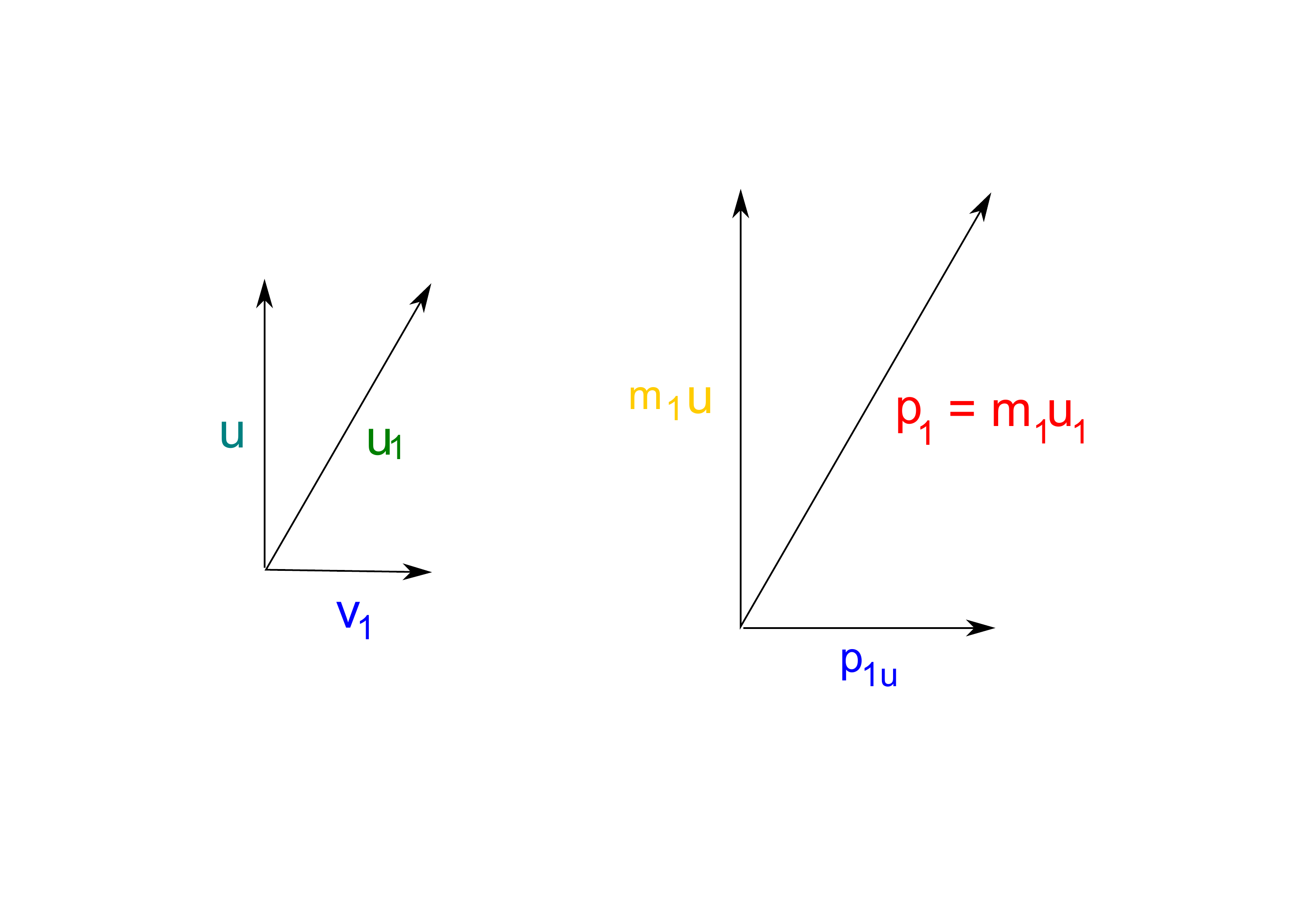}\vspace{-.5cm}\\
Fig.~24. Four-momentum in GS.

 \eject

 \vspace*{-3cm}
\includegraphics[width=.7\textwidth]{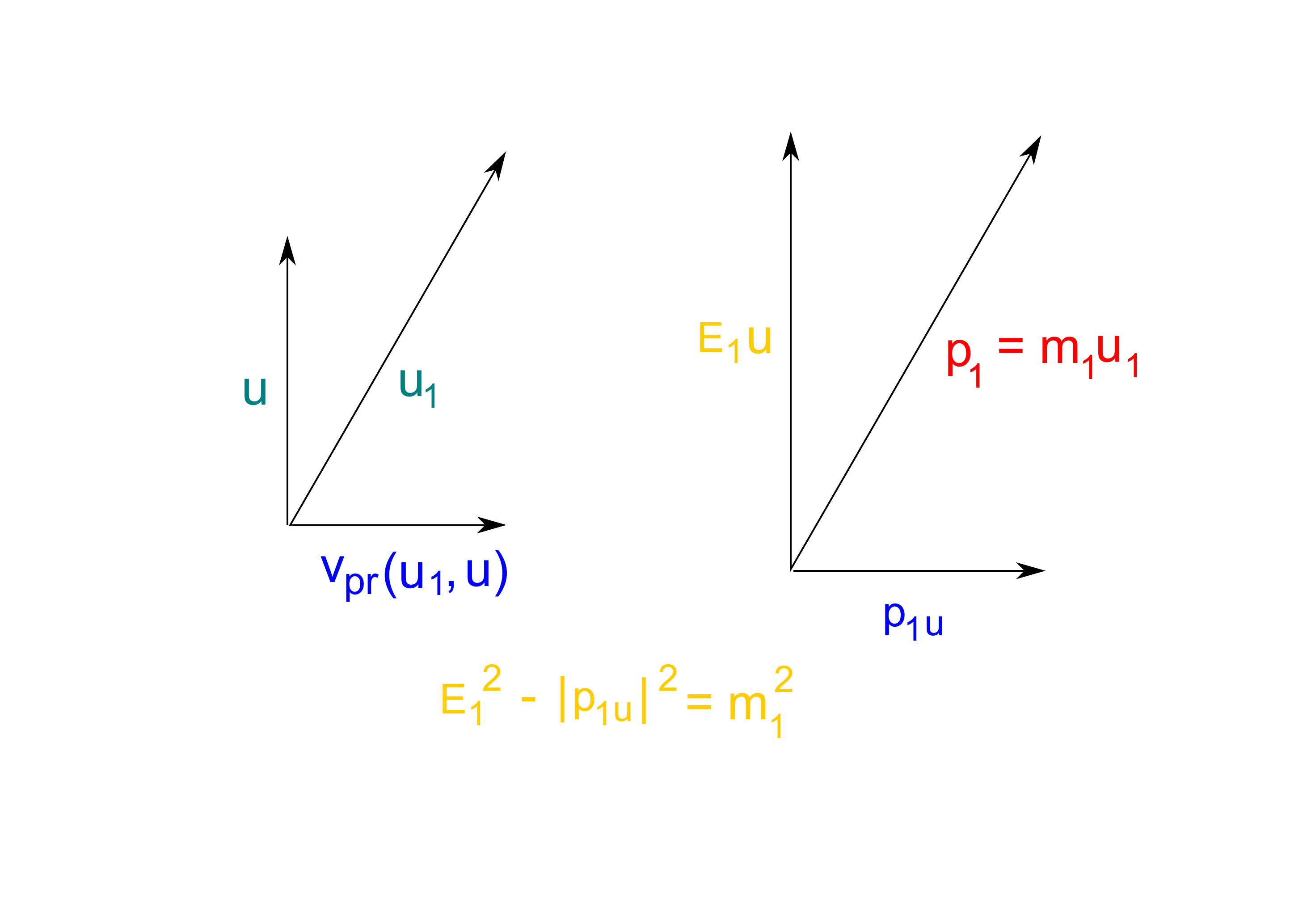}\vspace{-.5cm}\\
Fig.~25. Four-momentum in SR.\\
\includegraphics[width=.7\textwidth]{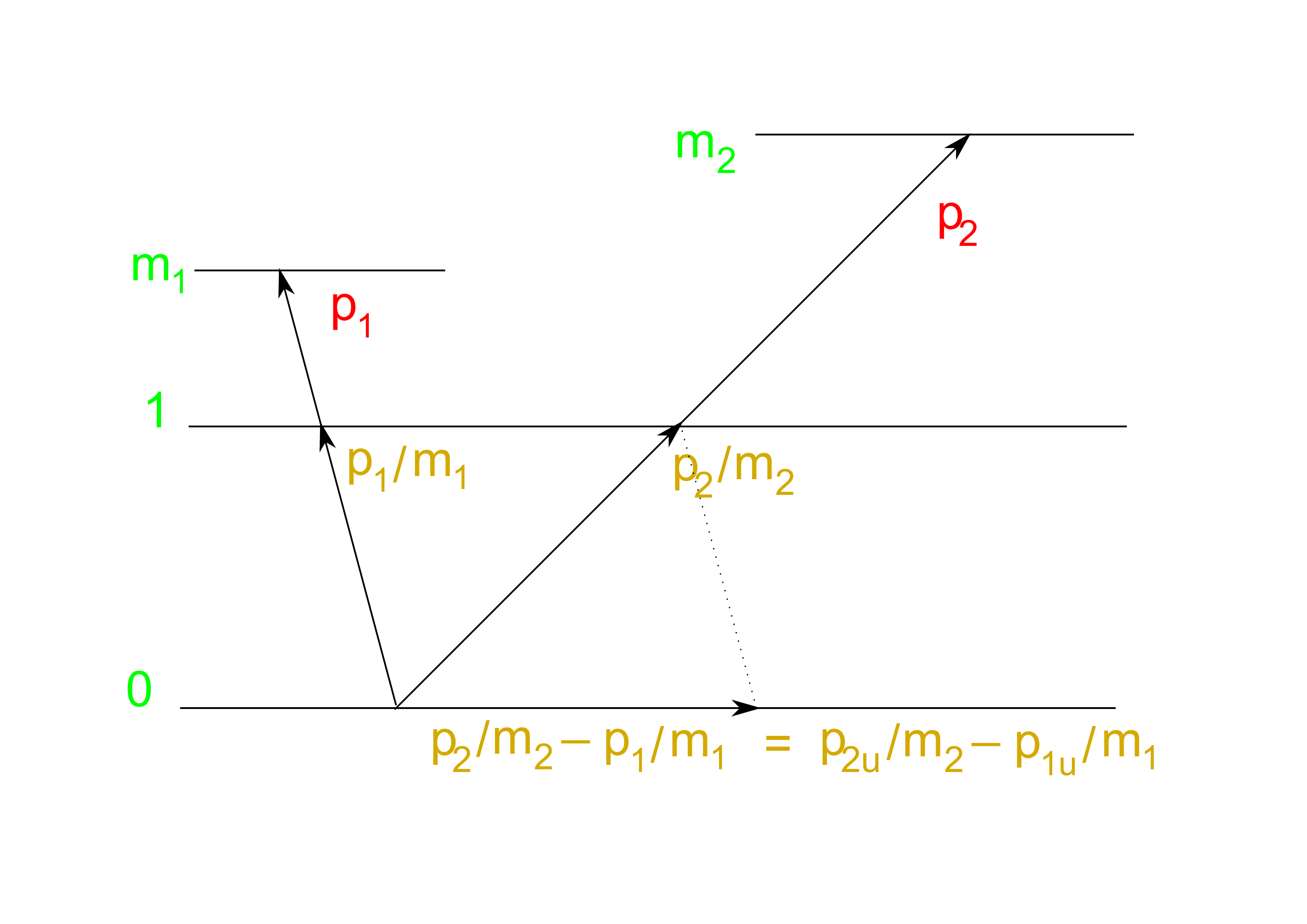}\vspace{-.5cm}\\
Fig.~26. Galilean invariant of two causal vectors:
$|p_{2u}/m_2-p_{1u}/m_1|$\\
\includegraphics[width=.7\textwidth]{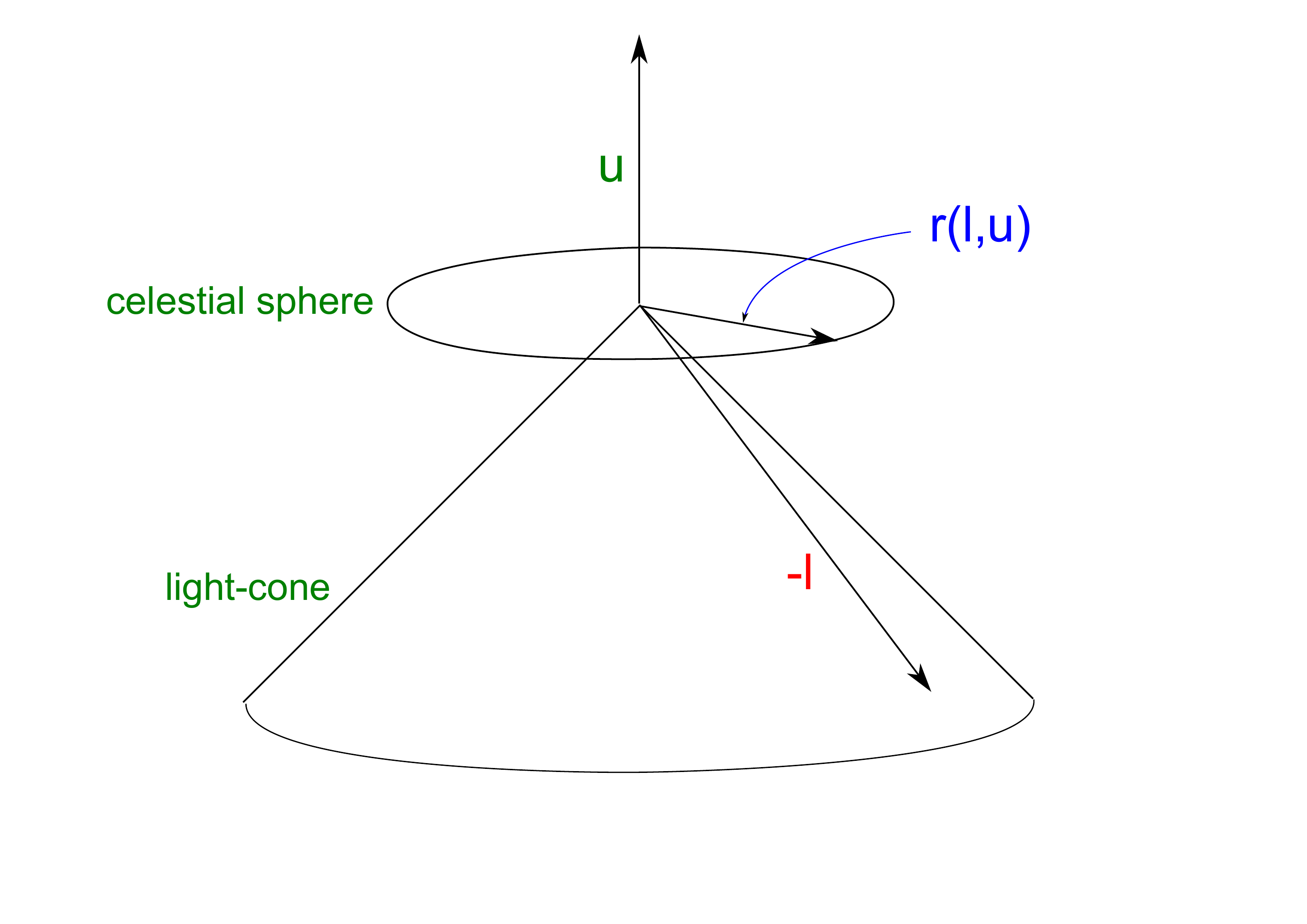}\vspace{-.5cm}\\
Fig.~27. Celestial sphere.

 \eject

 \vspace*{-3cm}
\includegraphics[width=.7\textwidth]{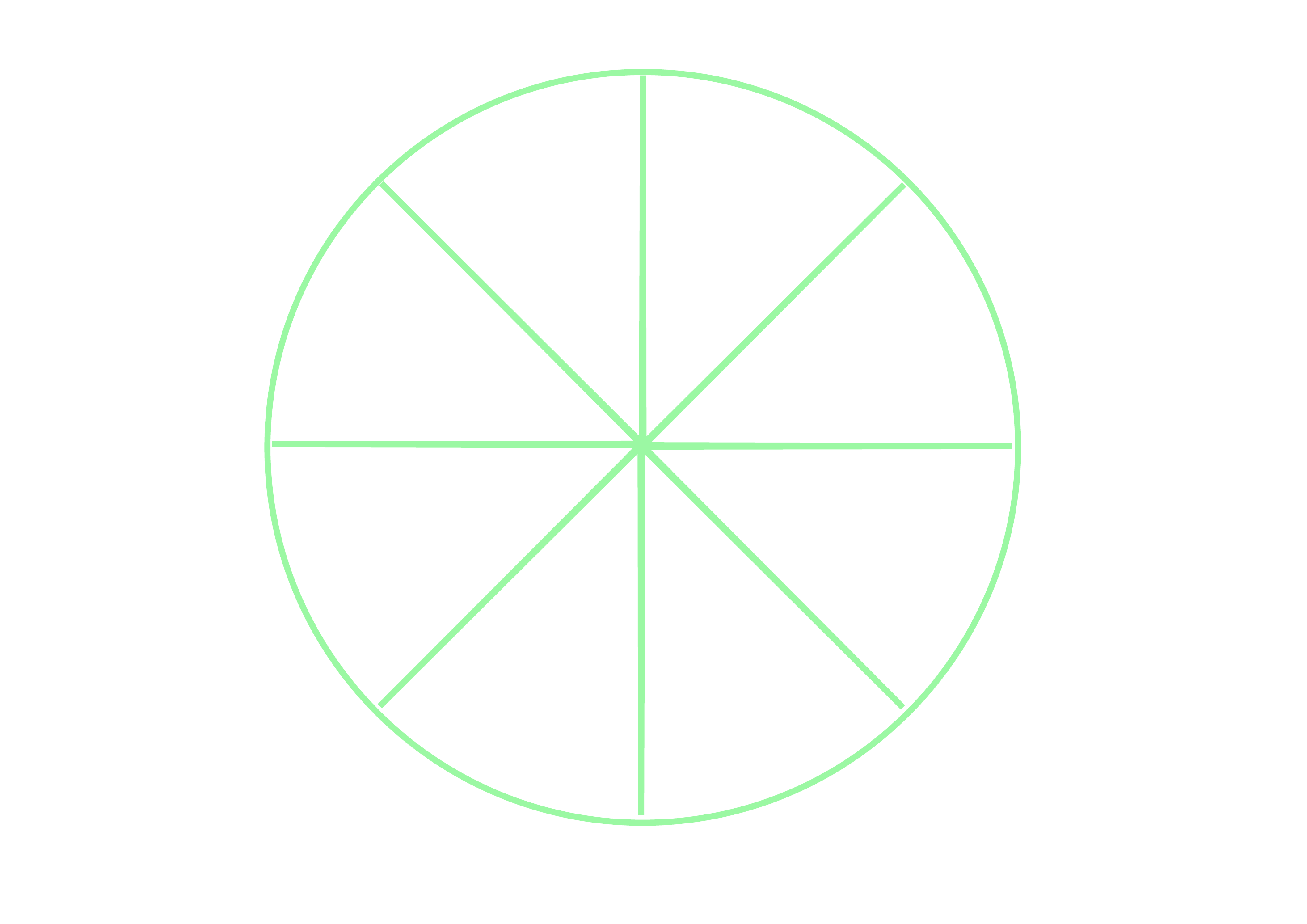}\vspace{-.5cm}\\
Fig.~28. A bicycle wheel in rest.
\includegraphics[width=.7\textwidth]{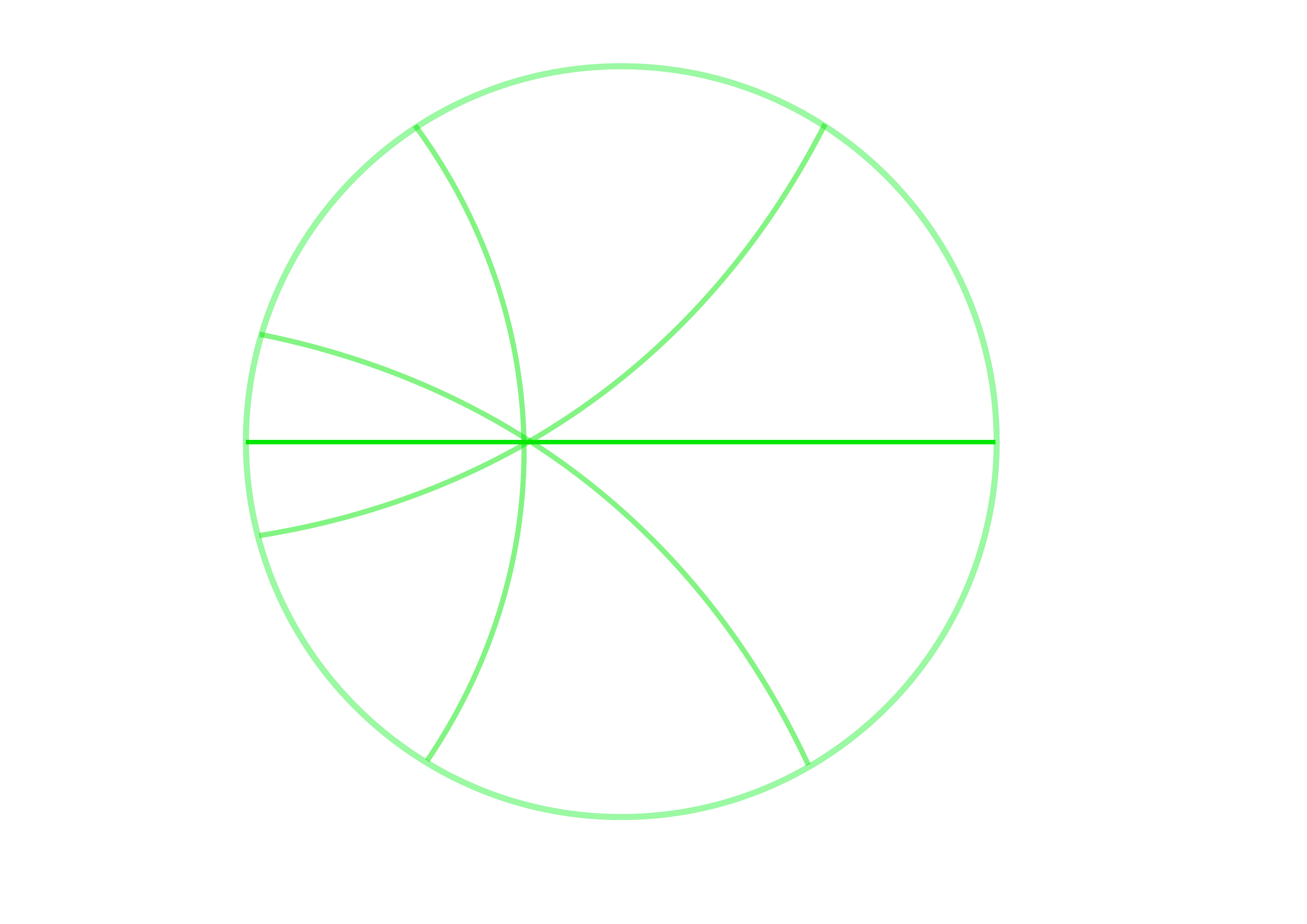}\vspace{-.5cm}\\
Fig.~29. The same wheel as seen by a fast moving observer.
\end{center}


\begin{thebibliography}{99}

\bibitem{bon}H.\,Bondi, \textsl{Relativity and Common
    Sense} (Doubleday \& Company, New York, 1964)
\bibitem{tra}W.\,Kopczyński and A.\,Trautman,
    \textsl{Spacetime and Gravitation} (John Wiley \& Sons,
    1992) [Polish original publication: Ossolineum, 1971].
\bibitem{ger}R.\,Geroch, \textsl{General Relativity from A
    to B} (The University of Chicago Press, Chicago \& London,
    1978).
\bibitem{pen}R.\,Penrose, ``The apparent shape of a
    relativistically moving sphere,'' Proc.\ Cambridge Phil.\
    Soc.~{\bf 55}, 137-139 (1959).
\bibitem{ter}J.\,Terrell, ``Invisibility of the Lorentz
    Contraction,'' Phys.\ Rev.~{\bf 116}, 1041-1045 (1959).
\bibitem{cel}However, in the original article
    on the shape of moving sphere, Ref.~4, there is a short
    remark on the idea used in the present article.
\bibitem{con}See e.g.~W.\,Rindler, ``Length Contraction
    Paradox,'' Am.~J.~Phys.~{\bf 29}, 365-366 (1961);\\
    R.\,Shaw, ``Length Contraction Paradox,''
    Am.~J.~Phys.~{\bf 30}, 72-72 (1962).

\end{thebibliography}
\end{document}